\newtheorem{theorem}{Theorem}[section]
\newtheorem{lemma}[theorem]{Lemma}
\newtheorem{claim}[theorem]{Claim}
\definecolor{amethyst}{rgb}{0.6, 0.4, 0.8}
\newcommand{\sfe}[2]{{#1} = E_{#2}}
\newcommand{\aat}[2]{{#1} = e_{#2}}
\newcommand{\indic}{\mathbbm{1}}
\newcommand{\pair}[2]{\mu_{#1}(#2)}
\newcommand{\matat}[2]{\text{matched}({#1},@{#2})}
\newcommand{\matbefore}[2]{\text{matched}({#1},<{#2})}
\newcommand{\matby}[2]{\text{matched}({#1},\leq{#2})}
\newcommand{\avail}[2]{#1 \text{ available }@{#2}}
\newcommand{\wet}{\weight_{e_t}}
\newcommand{\wmt}{\weight(\mu_t)}
\renewcommand{\citet}[1]{\cite{#1}} 
\newcommand{\be}{\begin{equation}}
\newcommand{\ee}{\end{equation}}
\newcommand{\beq}{\begin{equation*}}
\newcommand{\eeq}{\end{equation*}}
\newcommand{\eps}{\varepsilon}
\newcommand{\AutoAdjust}[3]{\mathchoice{ \left #1 #2  \right #3}{#1 #2 #3}{#1 #2 #3}{#1 #2 #3} }
\newcommand{\Xcomment}[1]{{}}
\newcommand{\InBrackets}[1]{\AutoAdjust{[}{#1}{]}}
\newcommand{\Ex}[2][]{\operatorname{\mathbf E}_{#1}\InBrackets{#2}}
\newcommand{\Prx}[2][]{\operatorname{\mathbf{Pr}}_{#1}\InBrackets{#2}}
\newcommand{\eqdef}{\overset{\mathrm{def}}{=\mathrel{\mkern-3mu}=}}
\newcommand{\vect}[1]{\ensuremath{\mathbf{#1}}}
\newcommand{\RN}[1]{%
  \textup{\uppercase\expandafter{\romannumeral#1}}%
}
\newcommand{\fst}{\RN{1}}
\newcommand{\snd}{\RN{2}}
\newcommand\restr[2]{{
  \left.\kern-\nulldelimiterspace 
  #1 
  \vphantom{\big|} 
  \right|_{#2} 
  }}
\def\prob{\Prx}
\newcommand{\weight}{w}
\newcommand{\weights}{\vect{w}}
\newcommand{\alg}{\textsf{ALG}}
\newcommand{\opt}{\textsf{OPT}}
\newcommand{\dhistory}{\mathcal{DH}}
\renewcommand{\emptyset}{\varnothing}
\newcommand{\dd}{\mathrm{d}}
\def \reals {{\mathbb R}}
\def \naturals {{\mathbb N}}
\newcommand{\subsets}[2][\naturals]{\binom{#1}{#2}}
\newcommand{\bvt}{\widetilde{V}}
\newcommand{\val}{\lambda}
\newcommand{\Val}{\Lambda}
\title{Secretary Matching with General Arrivals
	\thanks{This work is supported by Science and Technology Innovation 2030 - ``New Generation of Artificial Intelligence'' Major Project No.(2018AAA0100903), Innovation Program of Shanghai Municipal Education Commission, Program for Innovative Research Team of Shanghai University of Finance and Economics (IRTSHUFE) and the Fundamental Research Funds for the Central Universities. The first two authors are partially supported by the European Research Council (ERC) under the European Union's Horizon 2020 research and innovation program (grant agreement No. 866132), and by the Israel Science Foundation (grant number 317/17).}
}
\author{Tomer Ezra\thanks{Tel Aviv University. Email:  \texttt{tomer.ezra@gmail.com}, \texttt{michal.feldman@cs.tau.ac.il }}
	\and
	Michal Feldman\footnotemark[1]
	\and
	Nick Gravin\thanks{ITCS, Shanghai University of Finance and Economics. Email: \texttt{\{nikolai, tang.zhihao\}@mail.shufe.edu.cn}} \and Zhihao Gavin Tang\footnotemark[2] 
}
\begin{document}
	\date{}
	\maketitle
	
\begin{abstract}
	We provide online algorithms for {\em secretary matching} in general weighted graphs, under the well-studied models of vertex and edge arrivals. In both models, edges are associated with arbitrary weights that are unknown from the outset, and are revealed online. 
Under vertex arrival, vertices arrive online in a uniformly random order; upon the arrival of a vertex $v$, the weights of edges from $v$ to all previously arriving vertices are revealed, and the algorithm decides which of these edges, if any, to include in the matching. 
Under edge arrival, edges arrive online in a uniformly random order; upon the arrival of an edge $e$, its weight is revealed, and the algorithm decides whether to include it in the matching or not. 
We provide a $5/12$-competitive algorithm for vertex arrival, and show it is tight.
For edge arrival, we provide a $1/4$-competitive algorithm.
Both results improve upon state of the art bounds for the corresponding settings. 
Interestingly, for vertex arrival, secretary matching in general graphs outperforms secretary matching in bipartite graphs with 1-sided arrival, where $1/e$ is the best possible guarantee.

\end{abstract}

\section{Introduction}
\label{sec:intro}

A common tension in market scenarios, faced over and over again by individuals and firms, is choosing the right timing to commit to a decision.
This tension arises when one chooses their life-long partner, makes a reservation in Airbnb, accepts a job offer, or any other scenario where a decision should be made in the present, without knowing whether or to what extent a better option would arrive in the future.

The most basic mathematical model of such scenarios has been studied in the mathematical literature of optimal stopping theory. 
In a stopping problem, there are $n$ rounds, and a sequence of $n$ values $w_1, \ldots, w_n$ that are unknown from the outset. 
In every round $t$, the value $w_t$ is revealed, and the decision maker makes an irrevocable decision whether to select $w_t$, in which case the process ends with value $w_t$, or continue to the next round (unless it's round $n$), in which case the value $w_t$ is lost forever and the process continues to round $t+1$.
The goal is to maximize the obtained value.
The {\em competitive ratio} of an algorithm $\alg$ is the minimum ratio between the expected value obtained by $\alg$ and the globally maximal value, over all value sequences $\weights=(w_1,\ldots,w_n)$.

One can verify that not much can be done if the process is entirely adversarial. 
Two alternative models of stochastic variants have been studied extensively in the literature, and become to be known as the {\em secretary problem} \cite{Gardner60,ferguson1989} and the {\em prophet inequality} \cite{krengelS77,krengel1978semiamarts}.
In the secretary problem, the value sequence $\weights$ is arbitrary, but the values are assumed to arrive in a uniformly random order.
The secretary problem is known to admit a competitive ratio of $1/e$, and this is the best possible ratio \cite{Dynkin1963TheOC}.
In the prophet setting, every value $w_t$ is drawn (independently) from a probability distribution that is known from the outset. 
The prophet problem is known to admit a competitive ratio of $1/2$, and this is the best possible ratio \cite{krengelS77,krengel1978semiamarts,samuel1984comparison}.

A natural question arises: do these results extend to more complex stochastic optimization problems? This problem received a lot of attention in recent years in combinatorial structures such as uniform matroids \cite{HajiaghayiKS07,Kleinberg05}, graphical matroids \cite{KorulaP09}, general matroids  \cite{BabaioffIKK18,KleinbergW19}, intersection of matroids \cite{KleinbergW19}, matching in graphs \cite{KorulaP09,KesselheimRTV13,GravinTW19,EzraFGT20}, and general downward-closed feasibility constraints \cite{rubinstein2016beyond}.


Of particular interest to this paper is the extension to {\em matching} problems in weighted graphs, where the goal is to select a matching of maximum weight. 
Matching problems have been of great interest in the last decade, partly due to their high applicability to Internet markets \cite{Mehta13}, such as as online ad auctions, ridesharing platforms, online labor markets, and exchange markets for pairwise kidney exchange.

Two natural arrival models have been studied for matching, namely {\em vertex arrival} and {\em edge arrival}.
Edge arrival is the more standard one, in the sense that elements arrive one by one, as in classic secretary and prophet settings (see, e.g., \citet{KorulaP09} on graphical matroids).
However, in matching applications, the vertex arrival model is extremely natural, as the arriving entities often correspond to the vertices. 
In vertex arrival, vertices arrive one by one, each one along with its edges to all previous vertices. Various models of vertex arrival have been studied in the literature, including 1-sided vertex arrival in bipartite matching and general vertex arrival \cite{GamlathKMSW19}.
\paragraph{Online matching with vertex arrival.}
%
\citet{KorulaP09} introduced the following 1-sided bipartite matching setting, modeled as an online matching problem in a weighted bipartite graph $G=(L,R;E)$. 
A pool of jobs are available in the market (associated with vertices in $R$). 
Potential employees (associated with vertices in $L$) arrive one by one, in a random order. 
Upon the arrival of a potential employee her value for every job is revealed (the weights on the corresponding edges), and the algorithm should either match the employee to one of the available jobs or leave it unmatched. The goal is to maximize the total value in the market.
The special case where there is a single job coincides with the classic secretary problem, thus the $1/e$ is the limit of what can be achieved.
But the decision making process under the matching scenario is much more complex, as the algorithm should decide not only whether or not to match it, but also to whom, among all available jobs.
\citet{KorulaP09} showed a $1/8$-competitive algorithm for this setting, and \citet{KesselheimRTV13} settled this problem by showing that the $1/e$ guarantee of the classic secretary problem extends to 1-sided bipartite matching.


The analogous 1-sided bipartite matching setting in the prophet model has been studied by \citet{FeldmanGL15}. 
Here too, the $1/2$ guarantee from the classic prophet setting extends to the more complex 1-sided bipartite matching. 

However, the underlying structure of 1-sided bipartite matching is quite restricted, and does not capture more dynamic scenarios, where vertices from both sides of the market arrive dynamically, e.g., passengers and drivers, items and buyers, jobs and employees. Also note that some scenarios cannot be captured by a bipartite graph at all, e.g., a pool of students who should be paired into roommates, or exchange markets for pairwise kidney exchange. Such scenarios are best captured by matching in a general graph, where upon the arrival of a vertex $v$, the weights on edges $(v,u)$ are revealed, for all previously arriving vertices $u$. 
The prophet version of this scenario has been studied by \citet{EzraFGT20}, who showed that the guarantee of $1/2$ extends even to this general matching setting (albeit using a different proof approach). 

That is, in the prophet model, the guarantee of $1/2$ obtained for the simplest setting extends all the way to matching in general graphs.
It is only natural to ask whether the same extension holds in the secretary setting as well. 

\noindent {\bf Main Question 1:} What is the competitive ratio for online matching problem in general graphs in the secretary setting with vertex arrival? Can we achieve the $1/e$ guarantee that holds for 1-sided bipartite matching? 

Our answer is: we can achieve a better guarantee! Indeed, the impossibility result of the classic secretary problem does not apply. Due to random arrival order, the single vertex in the $R$ side of the graph may come late enough so that many of the edge weights would be revealed simultaneously, enabling us to break the $1/e$ barrier.

\noindent {\bf Theorem:} Matching secretary in general graphs with vertex arrival admits a $5/12$-competitive algorithm.
Moreover, $5/12$ is the best possible competitive ratio. 

\noindent {\bf Remark:} For ordinal setting (i.e., if the algorithm is based only on pairwise comparisons of edges without observing associated numerical values) our result implies a competitive ratio of $\frac{5}{24}$. 

\paragraph{Online matching with edge arrival.}
We now turn to online matching with edge arrival. 
In this model, the edges arrive one by one. Upon the arrival of an edge, its weight is revealed, and if both its endpoints are available, the algorithm makes an irrevocable decision whether to include it in the matching. 
This model has been studied in both the prophet and secretaries model, but unlike the vertex arrival model, no tight bounds are known. 
For prophet, the competitive ratio is known to lie between $0.337$ and $3/7$ \cite{GravinW19,EzraFGT20,Tristanthesis}.
For secretary, \citet{KesselheimRTV13} have established a competitive ratio of $1/(2e)$, by a reduction from edge arrival to vertex arrival in hypergraphs.
The upper bound of $1/e$ from the classic secretary setting applies here. Shrinking the gap between $1/(2e)$ and $1/e$ is our second main problem.

\noindent {\bf Main Question 2:} What is the competitive ratio for online matching problem in general graphs in the secretary setting with edge arrival? Can we achieve a better guarantee than $1/(2e)$?

We answer this question in the affirmative.

\noindent {\bf Theorem:} Matching secretary in general graphs with edge arrival admits a $1/4$-competitive algorithm.

The design and analysis of our algorithm for edge arrival carry over to the more general online bipartite hypergraph secretary matching problem (with suitable adjustments). 
Algorithm~\ref{alg:hypergraph} in Appendix~\ref{sec:hypergraph} gives a competitive ratio of $1/d^{\frac{d}{d-1}}$, where $d+1$ equals the maximum size of the hyperedges. This improves (by a constant factor) upon the previous lower bound of $\frac{1}{ed}$ \cite{KesselheimRTV13}.

\begin{table}[]
	\centering
	\begin{tabular}{|l|l|l|l|l|}
		\hline
		&                                                                                        &    &    Secretary                   & Prophet                                                                                                     \\ \hline
		\multirow{4}{*}{Vertex arrival} & \multirow{2}{*}{\begin{tabular}[c]{@{}l@{}}1-sided \\ bipartite matching\end{tabular}} & LB & $\geq 1/e$ \cite{KesselheimRTV13} & $\geq 1/2$ \cite{FeldmanGL15}                                                                          \\ \cline{3-5} 
		&                                                                                        & UB &      $\leq 1/e$       & $\leq 1/2$                                                                                          \\ \cline{2-5} 
		& \multirow{2}{*}{General graphs}                                                        & LB &  $\geq 5/12$   [Theorem~\ref{th:5-over-12}]  & $\geq 1/2$ \cite{EzraFGT20}                                                            \\ \cline{3-5} 
		&                                                                                        & UB &    $\leq 5/12$   [Theorem~\ref{thm:upper-bound-vertex}]          & $\leq 1/2$                                                                              \\ \hline
		\multirow{2}{*}{Edge arrival}   & \multirow{2}{*}{}                                                                      & LB & \begin{tabular}[c]{@{}l@{}}$\geq 1/(2e)$ \cite{KesselheimRTV13}\\ $\geq 1/4$  [Theorem~\ref{thm:edge_arrival}]
		\end{tabular}       &   $\geq 0.337$ \cite{EzraFGT20}\\ \cline{3-5} 
		&                                                                                        & UB &     $\leq 1/e$       & $\leq 4/9$ \cite{GravinW19}                                                                                           \\ \hline
	\end{tabular}
	\caption{Our results and previous results. UB and LB refer to upper and lower bounds, respectively.}
	\label{tab:results}
\end{table}

\subsection{Our Techniques}
\label{sec:tech}

\paragraph{Vertex arrival: lower bound.}
At a high level, our algorithm follows the standard explore \& exploit approach for secretary problems, which has been adopted by \citet{KesselheimRTV13} for matching with 1-sided vertex arrival.
The algorithm begins with an exploration phase, where no matches are made. 
Then, in the exploitation phase, it finds at each step the optimum matching over the set of vertices that already arrived, and matches the latest vertex to its partner in the optimal matching whenever possible.
Our algorithm departs from the one used in \cite{KesselheimRTV13} in two ways: (1) it uses a longer exploration phase (half of the vertices instead of $1/e$ fraction), (ii), it ensures that the latest vertex {\em always} has a partner in the optimal matching (in particular, we treat differently the cases of odd and even number of vertices). 

The key feature of our analysis is a precise accounting of the probability that a given vertex is matched at every step of the algorithm. This accounting turns out to be more challenging in our general vertex arrival model than in 1-sided vertex arrival~\cite{KesselheimRTV13}. Indeed, in 1-sided vertex arrival, one has to do the accounting only for vertices in the {\em offline side} of the graph, and an upper bound on the matching probability is sufficient. In general graphs, on the other hand, every vertex can be either actively matched (i.e., matched upon its arrival), or passively matched (i.e., chosen as a partner of a vertex that arrives later), and there are non-monotone dependencies between matching probabilities of the vertices. That is, the event that a given vertex is matched may be either positively or negatively correlated with the event that another vertex is matched, depending on the set of vertices that have already arrived. For this reason, an upper bound on the probability of matching is not sufficient, and we have to make exact calculations of the probability that a vertex is matched in every step of the algorithm. To this end, our algorithm ensures that the optimum matching is a perfect matching in every step of the algorithm.


The same high level analysis approach is also pertinent in the cousin prophet inequality setting. Specifically, the Online Contention Resolution Schemes (OCRS) for matching in general graphs in a prophet setting~\cite{EzraFGT20} ensures that each vertex is matched to its realized partner with a constant probability, i.e., this probability does not depend on the identity of the vertex or its arrival time. 
The accounting in the secretary setting, however, is more complex: the probability of matching the $t$-th vertex depends on its arrival time $t$ (but {\em not} on the identity of the vertex). This still results in a constant matching probability of every vertex due to random arrival order, but leaves us with a much richer space of possible policies. Indeed, one can vary how 
aggressively we should match a vertex to its partner in the current optimal matching at different time steps $t$.
Our Algorithm~\ref{alg:vertex-arrival} is derived from the solution to the respective optimization problem. 
Interestingly, in the exploitation phase, it does not condition the probability of matching on the time $t$ (unlike our edge-arrival algorithm, see below), and it results in a tight competitive ratio of $5/12$. We remark that the optimization problem is quite subtle compared to the simple constant probability policy used in the prophet setting from~\cite{EzraFGT20}.

\paragraph{Vertex arrival: tight upper bound.}  The tightness of the $1/e$ result for the classical secretary setting is often proved in traditional algorithms course in the {\em ordinal} model (where the algorithm observes only pairwise comparisons between elements and not associated numerical values). It is known (see, e.g.,~\cite{Gnedin1994}), but is already quite non trivial, to establish the $1/e$ upper bound for the {\em game of googol} --- the {\em cardinal} version of the classical secretary problem --- where the algorithm observes a random sequence of arbitrary large numbers and wishes to maximize the probability of stopping at the maximum number among them. In general, proving upper bounds for the cardinal variants of the secretary problem is notoriously difficult. For example, only recently Correa, D\"{u}tting, Fischer, and Schewior~\cite{correa2019prophet} provided a rather complex proof that the best competitive ratio for basic prophet secretary with unknown i.i.d. priors is $1/e$. Their proof relies on an interesting application of the infinite version of Ramsey theorem.

Establishing a tight upper bound in our problem is particularly challenging. On top of being a cardinal variant of the problem, it also has a complex combinatorial structure; namely, unlike the standard setting where the algorithm should simply decide whether or not to choose an element, our algorithm observes multiple edge values and should pick a set of edges. Moreover, the objective is to maximize the expected value of the selected matching and not the probability of picking the maximum element. Despite these difficulties, we managed to construct an instance of the matching secretary problem such that: (i) any $\alpha$-competitive online algorithm for the cardinal variant can be converted into an $\alpha$-competitive algorithm in the ordinal setting, where the online algorithm can only do pairwise comparisons between edges and the objective is to maximize the probability of selecting the maximum valued edge, and (ii) no online algorithm in the ordinal setting can be better than $5/12$-competitive. Our proof of (i) is similar to~\cite{correa2019prophet} (e.g., we also use the infinite version of Ramsey theorem), but it requires several novel ideas compared to~\cite{correa2019prophet} due to the more complex combinatorial nature of our matching problem. The derivation of (ii) requires substantial work; both to prove that a certain online algorithm in the ordinal setting is optimal for the given instance, and to calculate its competitive ratio.

\paragraph{Edge arrival.} Similar to~\cite{EzraFGT20} for the matching prophet setting, we consider the matching secretary model with {\em edge} arrival. \cite{EzraFGT20} took an OCRS approach, where the idea is to control the probability of matching a realized edge $uv$, ensuring that it is a constant fraction $\alpha$ of the probability that $uv$ appears in the optimum. On the one hand, it is desired to have $\alpha$ as large as possible. On the other hand, the event that both vertices $u$ and $v$ are not yet matched upon the arrival of the edge $uv$ should have a sufficiently high probability. Following a simple union bound argument, setting $\alpha=1/3$ was sufficient for the matching prophet setting. 

We take an analogous approach for the matching secretary, namely, we control the probabilities of selecting the last arriving edge at time $t$ given that it appears in the current optimal matching. 
However, unlike the simple solution of \cite{EzraFGT20}, we cannot simply set this probability to be a constant, since the edges arriving earlier generally have a higher chance to appear in the current optimal matching. Instead, we set these control probabilities $(\alpha_t)_{t=1}^{|E|}$ to be dependent on the time $t$. We obtained a recurrent relation on $(\alpha_t)_{t=1}^{|E|}$ using a union bound argument similar to ~\cite{EzraFGT20}, which ensures that upon the arrival of an edge $uv$, both ends of the edge are not yet matched with a sufficiently high probability. Given a sequence $(\alpha_t)_{t=1}^{|E|}$, one can derive a good estimate on the competitive ratio of the corresponding online algorithm. We solved the resulting constrained optimization problem and obtained the sequence of $(\alpha_t)_{t=1}^{|E|}$ defined in Algorithm~\ref{alg:edge-arrival}. Interestingly, despite the conceptual simplicity of Algorithm~\ref{alg:edge-arrival}, it is unclear how to implement it in polynomial time\footnote{The difficulty is that we need to understand what the online algorithm would do for different subsets of the arrived set of vertices in order to estimate the probability that two given vertices $u$ and $v$ are both available upon the arrival of the edge $uv$.}. Thus, our result for the edge arrival model is information theoretic. Obtaining a poly-time algorithm with this ratio remains as an interesting open problem.

In conclusion, our results for both vertex and edge arrival models establish a close connection between secretary and prophet settings. 
Our results demonstrate that tools like OCRS that prove useful in prophet settings (e.g., \cite{EzraFGT20}) can be used to improve state of the art results for secretary settings. In particular, the tight result for matching prophets translates (with suitable adjustments) into a tight result for matching secretaries in the vertex arrival model. This is in contrast to previous general-purpose results~\cite{Dughmi20} connecting Contention Resolution Schemes and secretary problems, which suffered certain constant factor losses in the transition from one setting to another. 

\subsection{Related work}

The 1-sided secretary matching problem studied by \citet{KorulaP09} is a generalization of the matroid secretary problem on transversal matroids, which was first introduced by \citet{BabaioffIKK18}. They designed constant competitive algorithms for graphs with bounded degrees, and \citet{DimitrovP12} generalized this result to arbitrary graphs. These results affirmatively answer the famous matroid secretary conjecture by \citet{BabaioffIKK18} for transversal matroids. Whether $\Omega(1)$-competitive algorithms exist for the secretary problem on general matroids remains an intriguing question. Currently, the best known ratio is $\Omega(1/\log \log \text{rank})$ \citet{Lachish14} and \citet{FeldmanSZ18}. 

The $1/e$-competitive algorithm by \citet{KesselheimRTV13} for 1-sided secretary matching is further extended to a truthful mechanism that attains the same competitive ratio by \citet{Reiffenhauser19}. \citet{KorulaP09} also studied the secretary matching on hypergraphs and proposed $\Omega(1/d^2)$-competitive algorithms for $d$-hypergraphs\footnote{A $d$-hypergraph is a hypergraph such that all its hyperedges have size at most $d$.}. This result is improved to $\Omega(1/d)$-competitive by \citet{KesselheimRTV13}. 

Another line of work considers the secretary problem in the ordinal setting. That is, the algorithm is restricted to do pairwise comparisons between elements, without knowing the exact values of their weights. \citet{HoeferK17} designed constant competitive algorithms for several families of constraints, including matching, packing LPs and independent set with bounded local independence number. Specifically, they studied the same vertex-arrival model for general graphs as our model and designed an $\frac{e+1}{12e}$-competitive algorithm that only uses ordinal information. \citet{SotoTV18} studied the ordinal matroid secretary problem and achieved improved competitive ratios for transversal matroids, matching matroids, laminar matroids, etc.

The general vertex arrival model adopted in this paper is first introduced by \citet{WangW15} in the online matching literature, where the focus is to select maximum size matching under adversarial vertex arrivals. This model is a natural generalization of the 1-sided online bipartite matching model by \citet{KarpVV90}. \citet{WangW15} designed a $0.526$-competitive algorithm for the fractional version of the problem. \citet{GamlathKMSW19} proposed a $1/2+\Omega(1)$-competitive algorithm for the integral version of the problem. They also proved that no algorithm has a competitive ratio larger than $1/2$ in the edge arrival setting. Motivated by online ride-sharing, there has been a growing interest in online matching in the past few years and other extensions of the 1-sided online bipartite matching model have been studied, including fully online matching~\cite{HuangPTTWZ19, HuangKTWZZ20, HuangTWZ20} and edge-weighted online windowed matching~\cite{AshlagiBDJSS19}.

The matching prophet problem with edge arrivals is first studied by \citet{KleinbergW19} under the more general framework of matroid intersections. \citet{GravinW19} explicitly studied the bipartite matching setting and designed a threshold-based $\frac{1}{3}$-competitive algorithm. They also showed an upper bound of $\frac{4}{9}$. The upper bound was improved to $\frac{3}{7}$ in \cite{Tristanthesis}.
\citet{EzraFGT20} designed an improved $0.337$-competitive algorithm for general graphs.

\section{Model and Preliminaries}
\label{sec:model}
The setting is presented by a graph $G=(V,E)$, where $V$ is a set of $n$ vertices. 
Each edge $uv\in E$ has a weight $w_e\in\reals$ and the vector $w\in \reals^{|E|}$ contains the weights of all edges.
Given a subset of the vertices $T \subseteq V$, we denote by $G(T)$ the graph induced by $T$. 
Similarly, given a subset of the edges $E' \subseteq E$, we denote by $G(E')$ the graph induced by $E'$.   
We consider two arrival models, namely (i) vertex arrival, and (ii) edge arrival, where the arriving elements are the vertices and edges, respectively.
In both models, upon the arrival of an element, a matching decision should be made immediately and irrevocably, and the goal is to maximize the total weight of the resulting match. 


A matching $\mu$ is a subset of $E$, where every vertex is matched to at most a single other vertex.
We will also write $\mu(v)$ for the vertex matched with $v$ in the matching $\mu$.  
That is, if $uv \in \mu$ then $\mu(v) = u$ and $\mu(u) = v$.  
We denote by $w(\mu)\eqdef\sum_{e\in\mu}w_e$ the total weight of matching $\mu$.
In addition, given a subset of the vertices $T \subseteq V$ and a matching $\mu$, write $\mu|_T = \{ uv \in \mu | v,u \in T \}$ for the matching $\mu$ restricted to vertices in $T$.
For a weight function $w$, we write $\mu^*(w)$ for the maximum weighted matching under $w$. 

\paragraph{Vertex arrival}
Under vertex arrival model, the vertices arrive in a {\em uniformly random} order.  
We rename the vertices $v_1, \ldots, v_n$ according to their arrival order, so that $v_t$ is the vertex that arrives at time $t$.
We denote by $V_t = \{v_1, \ldots, v_t\}$ the set of vertices that arrived up to time $t$, and by $G(V_t)$ the graph induced by $V_t$.
Upon the arrival of vertex $v_t$, the weight $w_{v_t v_j}$ is revealed for all vertices $v_j \in V_{t-1}$. 
Consequently, $v_t$ can either be matched to some available vertex $v_j \in V_{t-1}$ (in which case $v_t$ and $v_j$ are marked as unavailable) or left unmatched (in which case $v_t$ remains available for future  matches).
We assume that the number of vertices $n$ is known.

Without loss of generality, we may assume that $G$ is a complete graph: we simply add 0-weight edges for the missing edges. Thus we may assume that for every $V' \subseteq V$ such that $|V'|$ is even, the maximum weighted matching of $G(V')$ matches all vertices of $G(V')$.  

\paragraph{Edge arrival}
Under edge arrival model, the edges arrive in a {\em uniformly random} order.
We rename the edges $e_1, \ldots, e_m$ according to their arrival order, so that $e_t$ is the edge that arrives at time $t$. 
We denote by $E_t = \{e_1, \ldots, e_t\}$ the set of edges that arrived up to time $t$, and by $G(E_t)$ the graph induced by $E_t$.
Upon the arrival of edge $e_t=(u,v)$, its weight $w_{e_t}$ is revealed. 
If both $u$ and $v$ are available, then $e_t$ can either be matched (in which case $u$ and $v$ are marked as unavailable) or left unmatched (in which case $u$ and $v$ remain available for future matches). 
We assume that the number of edges $m$ is known.

We assume without loss of generality that the maximum weighted matching of $G(S)$ for any $S\subseteq E$ in both vertex and edge arrival model is unique. Indeed, we can perturb the weight of every edge by adding to it a random number in $[0,\epsilon]$, for a sufficiently small $\epsilon$.

\section{Secretary Matching with Vertex Arrival}
In this section, we present an algorithm that gives a competitive ratio of $5/12$. The algorithm ignores the first $k$ vertices (exploration phase). 
Then, in every round $t$, it makes sure that the number of vertices is even. It does so by removing a random vertex $r_t$ from $V_{t-1}$ when $t$ is odd. 
The algorithm finds maximum weighted matching $\mu_t$ in the graph $G(V_t)$ ($G(V_t\setminus\{v_{r_t}\})$ if $t$ is odd). Since the matching is complete, $v_t$ must have a partner $\mu_t(v_t)$ in this matching. If this partner is available, the algorithm matches $v_t$ to it. See Algorithm~\ref{alg:vertex-arrival}.

\begin{algorithm}
	\caption{$5/12$-matching secretary for vertex arrival (for $k=\lfloor\frac{n}{2}\rfloor$)}
	\label{alg:vertex-arrival}
	\begin{algorithmic}[1]
		\STATE Let $v_1,\ldots,v_n$ be the vertices in arrival order
		\STATE $A=V$
		\quad\quad\quad\quad\quad\quad\quad\quad
		\tcp{$A$ is the set of available vertices}
		\STATE $\mu=\emptyset$  \quad\quad\quad\quad\quad\quad\quad\quad
		\tcp{$\mu$ is the returned matching}
		\FOR{$t \in \{k+1,...,n\}$} 
		\STATE Let $V_t = \{v_1,\ldots,v_t\}$
		\quad\quad\quad\quad\quad\quad\quad\quad
		\tcp{$V_t$ is the set of vertices arrived up to time $t$}
		\IF{$t$ is odd} 
		\STATE Select $r_t \in \{1,\ldots t-1\}$ uniformly at random 
		\STATE Set $V'_t=V_t \setminus \{v_{r_t}\}$ \label{line:drop}
		\quad\quad\quad\quad\quad\quad\quad\quad
		\tcp{delete a random vertex from $v_1, \ldots, v_{t-1}$}
		\ELSE
		\STATE Set $V'_t=V_t $
		\ENDIF
		\STATE Let $\mu_t$ be the maximum weighted matching in $G(V'_t)$
		\IF{$\mu_t(v_t) \in V_{t} \cap A$}
		\STATE Add $e_t\eqdef v_t\mu_t(v_t)$ to $\mu$ 
		\quad\quad\quad\quad\quad\quad\quad\quad
		\tcp{add the chosen edge to the matching}
		\STATE Remove $v_t$ and $\mu_t(v_t)$ from $A$
		\ENDIF  
		\ENDFOR
		\STATE Return matching $\mu$
	\end{algorithmic}
\end{algorithm} 

In our analysis, the event ``$\matby{u}{t}$'' refers to the event that vertex $u$ is matched either before the arrival of $v_t$ or
exactly in the round $t$.
The following theorem asserts that Algorithm~\ref{alg:vertex-arrival} achieves a competitive ratio of $\frac{5}{12}$.

\begin{theorem}
\label{th:5-over-12}
Algorithm~\ref{alg:vertex-arrival}, with $k=\lfloor \frac{n}{2} \rfloor$, has a competitive ratio of $\frac{5}{12}$ for matching secretary with vertex arrival.
\end{theorem}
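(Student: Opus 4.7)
The plan is to decompose $E[w(\mu)] = \sum_{t=k+1}^{n} E\bigl[w(e_t)\,\mathbbm{1}[e_t \text{ added}]\bigr]$ and lower bound each round's contribution. Two ingredients are needed per round: (i)~a lower bound on the expected weight of the candidate edge $e_t$, related to $w(\textsf{OPT})$, and (ii)~a lower bound on the probability $\Pr[\mu_t(v_t) \in A]$ that the candidate edge is actually added.

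For ingredient (i), I first establish a structural claim about the joint distribution of $(V'_t, v_t)$: namely, $V'_t$ is a uniformly random even-sized subset of $V$, and conditional on $V'_t$, the arriving vertex $v_t$ is uniform in $V'_t$. For even $t$ this is immediate from the random arrival order; for odd $t$ a short counting argument verifies that the additional deletion of $v_{r_t}$ restores the uniform distribution on size-$(t-1)$ subsets containing a uniform marked element. Since $G$ is complete and $|V'_t|$ is even, $\mu_t$ is a perfect matching on $V'_t$, so the candidate edge $e_t = v_t\mu_t(v_t)$ is exactly a uniformly random edge of $\mu_t$, each with probability $2/|V'_t|$. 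Combined with the subset-sampling bound $E[w(\mu_t)] \ge \frac{|V'_t|(|V'_t|-1)}{n(n-1)} w(\textsf{OPT})$ (which follows from noting that $\textsf{OPT}|_{V'_t}$ is a feasible matching in $G(V'_t)$ and each edge of $\textsf{OPT}$ survives into $V'_t$ with probability $\tfrac{|V'_t|(|V'_t|-1)}{n(n-1)}$), this yields
\[ E[w(e_t)] \;\ge\; \frac{2(|V'_t|-1)}{n(n-1)}\,w(\textsf{OPT}). \]

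Ingredient (ii) is the main obstacle. Let $q_s = \Pr[\text{edge added at step }s]$. By vertex-symmetry of the algorithm and the marginal uniformity of both endpoints of the added edge over $V$ (established above), each fixed vertex $u$ is matched during step $s$ with probability exactly $2q_s/n$, so $u$ is unmatched at the start of step $t$ with probability $1 - \tfrac{2}{n}\sum_{s=k+1}^{t-1} q_s$. The subtle point is that the event ``$\mu_t(v_t)\in A$'' is \emph{not} independent of the identity of $\mu_t(v_t)$, because heavier or more ``central'' vertices are more likely to have been pulled into the matching earlier. To handle this I plan a deferred-decisions argument: condition on the entire history through step $t-1$ (which determines $V_{t-1}$ and the availability set $A\cap V_{t-1}$); then $v_t$ (and, for odd $t$, $v_{r_t}$) are fresh uniform draws independent of $A$, and vertex labels on $V\setminus V_{t-1}$ are exchangeable. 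By further averaging over $V_{t-1}$ and invoking the uniform structure from ingredient (i), I expect to obtain the recurrence $q_t \ge 1 - \tfrac{2}{n}\sum_{s=k+1}^{t-1} q_s$ and, more importantly, the per-round bound $E[w(e_t)\,\mathbbm{1}[e_t \text{ added}]] \ge \tfrac{2(|V'_t|-1)}{n(n-1)} q_t\,w(\textsf{OPT})$. This is the heart of the proof: as emphasized in Section~\ref{sec:tech}, the 1-sided bipartite symmetry of~\cite{KesselheimRTV13} does not apply here because every vertex can be either actively matched (as $v_s$) or passively matched (as $\mu_s(v_s)$), and a mere upper bound on matched probability is insufficient -- one must carry out an \emph{exact} computation.

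Finally, combining the per-round inequality and summing over $t = k+1,\dots,n$,
\[
E[w(\mu)] \;\ge\; \sum_{t=k+1}^{n} \frac{2(|V'_t|-1)}{n(n-1)}\cdot q_t \cdot w(\textsf{OPT}),
\]
and plugging in $k=\lfloor n/2\rfloor$, splitting by parity of $t$ to handle $|V'_t|=t$ vs.\ $|V'_t|=t-1$, and solving the recurrence for $q_t$ in closed form, the remaining task is to verify that the right-hand side equals $\tfrac{5}{12}\,w(\textsf{OPT})$ up to lower-order terms in $n$. The optimality of $k=\lfloor n/2\rfloor$ is then confirmed by maximizing the resulting expression over $k$. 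The bulk of the work is the decoupling in the previous paragraph; once that recurrence is in hand, the remaining arithmetic is routine modulo careful handling of the parity split and boundary effects for small $n$.
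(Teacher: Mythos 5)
Ingredient (i) of your plan (including the perfect-matching observation and the pair-marginal bound $E[w(\mu_t)]\ge \tfrac{|V'_t|(|V'_t|-1)}{n(n-1)}w(\textsf{OPT})$) is essentially what the paper does, modulo the nitpick that $V'_t$ is not uniform over $(t-1)$-subsets for odd $t$ --- it always contains $v_t$ --- although the pair marginal works out to the same $\tbinom{|V'_t|}{2}/\tbinom{n}{2}$, so this is harmless. Ingredient (ii), however, contains the real error. You correctly flag that $\mu_t(v_t)$ is a weight-dependent, non-uniform vertex, but the recurrence you propose, $q_t \ge 1-\tfrac{2}{n}\sum_{s<t}q_s$, is false and in fact points the wrong way. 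The quantity $\tfrac{2}{n}\sum_{s<t}q_s$ is the \emph{unconditional} probability that a fixed vertex $u$ has been matched by step $t-1$; but $\mu_t(v_t)$ is guaranteed to have arrived by step $t-1$, and the relevant quantity is the probability that $u$ is matched \emph{conditioned on having already arrived}. These are related by $\Pr[u\text{ matched by }t-1]=\tfrac{t-1}{n}\Pr[u\text{ matched by }t-1\mid u\in V_{t-1}]$, so the unconditional probability is a factor $\tfrac{t-1}{n}<1$ smaller than what you need. Plugging your recurrence in as an equality gives $q_t\approx(1-\tfrac2n)^{t-k-1}\approx e^{-2(t-k)/n}$, which at $t=n$, $k=n/2$ yields $e^{-1}\approx 0.368$, not $\tfrac{5}{12}\approx 0.417$; so the approach does not reach the target constant.

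The missing piece is what the paper proves as Lemma~\ref{lem:pr-u-matched}: an \emph{exact} formula $\Prx{\matby{u}{t}\mid V_t=\bvt}=p(k,t)$, valid for \emph{every} vertex $u\in\bvt$ and \emph{every} realization $\bvt$ of the arrival set. This equality, proved by induction on $t$ with a case split on whether the arriving vertex is $u$, is the partner $\mu_{\bvt}(u)$, or neither (and an extra case for the dropped vertex when $t$ is odd), is precisely what lets one write $\Prx{\mu_t(v_t)\in A_t\mid V_t,v_t,V'_t}=1-p(k,t-1)$ regardless of which vertex $\mu_t(v_t)$ turns out to be, and hence factor $E[w(e_t)\,\indic[\text{added}]]=(1-p(k,t-1))\,E[w(e_t)]$. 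Your ``deferred decisions'' sketch --- conditioning on the history through $t-1$ and noting exchangeability on $V\setminus V_{t-1}$ --- does not establish this: the algorithm's past decisions depend on the \emph{weights} of vertices in $V_{t-1}$, so vertex labels inside $V_{t-1}$ are not exchangeable given the history, and there is no a priori reason that $\mu_t(v_t)$ (determined by weights) should hit available vertices at the same rate as a uniform draw from $V_{t-1}$. The content of the paper's lemma is exactly that this heuristic is nonetheless true once you average over arrival orders of $V_{t-1}$, and the proof requires the careful case-based induction rather than a symmetry argument.
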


The following lemma will be used in the proof of Theorem~\ref{th:5-over-12}. 

\begin{lemma}\label{lem:pr-u-matched}
For every $t \geq k$, every possible realization $\bvt$ of $V_t$ (i.e., $\bvt \subseteq V, |\bvt|=t$), and every vertex $u \in \bvt$, it holds that 
\begin{equation}
\Prx{\matby{u}{t} \mid V_t=\bvt} = \frac{2}{3} \left(  1 -   \frac{(t-3)! \cdot k!}{t! \cdot (k-3)!} \right).
\label{eq:pr-u-matched}
\end{equation} 	
\end{lemma}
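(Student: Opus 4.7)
The plan is to prove~\eqref{eq:pr-u-matched} by induction on $t$. Writing $q(t) := \tfrac{2}{3}\left(1 - \tfrac{k(k-1)(k-2)}{t(t-1)(t-2)}\right)$ for the right-hand side, the base case $t=k$ gives $q(k)=0$, matching the fact that no vertex is matched during the exploration phase. For the inductive step, split
\[
\Prx{\matby{u}{t} \mid V_t = \bvt} \;=\; \Prx{\matbefore{u}{t} \mid V_t = \bvt} \;+\; p^*_t,
\]
where $p^*_t := \Prx{u\text{ matched at exactly time }t \mid V_t = \bvt}$. The first summand is immediate: conditioning on the identity of $v_t$, with probability $1/t$ we have $v_t=u$ (contributing $0$ since $u$ was not yet present), and otherwise $V_{t-1} = \bvt \setminus \{v_t\}$ is a valid size-$(t-1)$ realization containing $u$, so the inductive hypothesis yields $\tfrac{t-1}{t}\,q(t-1)$.

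The heart of the argument is $p^*_t$. The key structural fact, enforced by dropping $v_{r_t}$ when $t$ is odd, is that $|V'_t|$ is always even, so $\mu_t$ is a perfect matching on $V'_t$ and $v_t \in V'_t$ always has a partner. Therefore $u$ is matched at time $t$ iff $u \in \{v_t,\mu_t(v_t)\}$ and $\mu_t(v_t)$ is available at time $t$ (note that $v_t$ itself is always available, since it just arrived). For even $t$, $V'_t = \bvt$ fixes $\mu_t$ deterministically; the subcases $v_t = u$ and $v_t = \mu_t(u)$ each contribute $\tfrac{1}{t}(1-q(t-1))$ by applying the inductive hypothesis to the size-$(t-1)$ realization $V_{t-1}$, giving $p^*_t = \tfrac{2}{t}(1 - q(t-1))$.

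The main obstacle is the odd-$t$ case, where the random drop $v_{r_t}$ makes $\mu_t$ itself random. The enabling observation is that $r_t$ is drawn uniformly on $\{1,\dots,t-1\}$ and independently of the arrival permutation, so conditional on $v_t = w$ and $v_{r_t} = y$ the arrival order of $V_{t-1} = \bvt \setminus \{w\}$ remains uniform over permutations, and the inductive hypothesis still governs matched-status probabilities at times $<t$. The subcase $u = v_t$ contributes $\tfrac{1}{t}(1 - q(t-1))$ after averaging over $y$, exactly as in the even case. For the subcase $u = \mu_t(v_t) \neq v_t$, after pulling out the common factor $1 - q(t-1)$ the remaining sum reduces to
\[
\frac{1}{t(t-1)} \sum_{w \in \bvt\setminus\{u\}} \bigl|\bigl\{y \in \bvt\setminus\{u,w\} : uw\text{ lies in the maximum weighted matching of } G(\bvt\setminus\{y\})\bigr\}\bigr|.
\]
The combinatorial identity needed is that for each $y \in \bvt\setminus\{u\}$ there is exactly one valid $w$, namely $u$'s partner in the maximum weighted matching of $G(\bvt\setminus\{y\})$. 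The double sum therefore equals $t-1$, and this subcase contributes $\tfrac{1}{t}(1 - q(t-1))$, yielding $p^*_t = \tfrac{2}{t}(1 - q(t-1))$ in the odd case as well.

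Combining both summands,
\[
\Prx{\matby{u}{t} \mid V_t = \bvt} \;=\; \tfrac{t-1}{t}\,q(t-1) + \tfrac{2}{t}(1 - q(t-1)) \;=\; \frac{(t-3)\,q(t-1) + 2}{t},
\]
and the short algebraic identity $(t-3)\cdot\tfrac{k(k-1)(k-2)}{(t-1)(t-2)(t-3)} = \tfrac{k(k-1)(k-2)}{(t-1)(t-2)}$ confirms this equals $q(t)$, closing the induction.
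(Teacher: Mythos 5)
Your proof is correct and follows essentially the same approach as the paper's: induction on $t$ with conditioning on the identity of $v_t$ (and of $v_{r_t}$ when $t$ is odd), using the inductive hypothesis on $V_{t-1}$ and the fact that $\mu_t$ is a perfect matching on $V'_t$. Your reorganization into $\Prx{\matbefore{u}{t}} + p^*_t$ and your explicit double-counting argument (swapping the order of summation over $w$ and $y$ so each $y$ contributes exactly one valid partner $w$) unify the even and odd cases somewhat more cleanly than the paper's three-way case split, but the key observations and the resulting recursion $\frac{2}{t} + \frac{t-3}{t}q(t-1)$ are identical.
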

\begin{proof}
To prove the lemma, we show that the probability that $u$ is matched by time $t$, conditioned on $V_t=\bvt$ can be expressed by the following recursive formula: 
\begin{equation}
\Prx{\matby{u}{t} \mid V_t=\bvt} = p(k,t),
\label{eq:matched}
\end{equation}
where 
\begin{equation}
\label{eq:recursion}
p(k,k) = 0 \quad \mbox{ and } \quad  p(k,t) = \frac{2}{t} +\frac{t-3}{t}\cdot p(k,t-1) \mbox{ for every } t \in \{k+1,\ldots,n\}.
\end{equation}
We prove \eqref{eq:matched} by induction on $t$. 
For $t= k$, $p(k,k)=0$ and~\eqref{eq:matched} holds trivially.
Consider next the case where $t>k$, and $V_t = \bvt$. 
Recall that for every set $T \subseteq V$ of even size and $u\in T$, $\mu_{T}(u)$ denotes the match of $u$ in the maximum weighted matching in $G(T)$ (recall that the maximum matching is unique and matches all vertices).
We distinguish between two cases, namely whether $t$ is even or odd; in both cases $V'_t$ is even.

\textbf{Case 1:} $t$ is even. 
In this case $V'_t=V_t$. We partition the event that $u$ is matched by time $t$, given that $V_t=\bvt$, into the following disjoint events: (i) $v_t=u$, (ii) $v_t=v$ for some $v \in \bvt \setminus \{u,\pair{\bvt}{u}\}$, 
and (iii)  $v_t=\pair{\bvt}{u}$. Each one of these events occurs with probability $\frac{1}{t}$ (in (ii), $\frac{1}{t}$ is the probability of every given $v\in \bvt \setminus \{u,\pair{\bvt}{u}\}$). We get:
\begin{eqnarray*}
	\Prx{\matby{u}{t} \mid V_t=\bvt} & = &  
	\Prx{\matby{u}{t}\mid V_t=\bvt,  v_t=u } \frac{1}{t}\\
	& + &  \sum_{v \in \bvt \setminus \{u,\pair{\bvt}{u}\}} \Prx{\matby{u }{t} \mid V_t=\bvt, v_t=v }\frac{1}{t} \\
	& + &  \Prx{\matby{u}{t}  \mid V_t=\bvt, v_t=\pair{\bvt}{u}} \frac{1}{t}.
\end{eqnarray*}
If $v_t=u$, then $u$ is matched by time $t$, iff $\pair{\bvt}{u}$ is unmatched before $u$'s arrival, which happens with probability $1-p(k,t-1)$ by induction hypothesis for $V_{t-1}=\bvt\setminus\{u\}$. If $v_t$ is neither $u$ nor $\pair{\bvt}{u}$, then $u$ is matched by time $t$ iff it is matched by time $t-1$. Finally, if $v_t=\pair{\bvt}{u}$, then $u$ is always matched by $t$, since if it is unmatched before time $t$, it will be matched to $\pair{\bvt}{u}$ upon arrival of $v_t=\pair{\bvt}{u}$. Putting it all together we get:
\begin{eqnarray*}
\Prx{\matby{u}{t} \mid V_t=\bvt}  & = &  
(1-p(k,t-1))\frac{1}{t}  +(t-2) p(k,t-1)  \frac{1}{t} + \frac{1}{t}\\
& = & \frac{2}{t} + \frac{t-3}{t}\cdot p(k,t-1)\\
& \stackrel{\eqref{eq:recursion}}{=} & p(k,t).
\end{eqnarray*}

\textbf{Case 2:} $t$ is odd. Let $r_t$ be the index of the random vertex that is dropped in line~\ref{line:drop} of the algorithm. Then, $V'_t=\bvt \setminus \{v_{r_t}\}$. 
We partition the event that $u$ is matched by time $t$, given that $V_t=\bvt$, into the following disjoint events: 
(i) $u=v_t$, (ii) $u=v_{r_t}$, and (iii) $u\neq v_t, v_{r_t}$.  
Each of the events (i) and (ii) occurs with probability $\frac{1}{t}$; event (iii) occurs with probability $\frac{t-2}{t}$.
We get:
\begin{eqnarray*}
	\Prx{\matby{u}{t}\mid V_t=\bvt} & = &  
	\Prx{\matby{u}{t} \mid V_t=\bvt,  u=v_t } \frac{1}{t}\\
	& + & \Prx{\matby{u}{t} \mid V_t=\bvt, u=v_{r_t}} \frac{1}{t}\\
	& + & \Prx{\matby{u}{t}\mid V_t=\bvt, u\neq v_t,v_{r_t} }\frac{t-2}{t}.
\end{eqnarray*}
If $u=v_t$, then $u$ is matched iff its match is available in round $t$, which happens with probability $1-p(k,t-1)$, by induction.
If $u=v_{r_t}$, then $u$ is matched by time $t$ iff it is matched by time $t-1$, which happens with probability $p(k,t-1)$ by the induction hypothesis for $V_{t-1}=\bvt\setminus\{v_t\}$.
If $u \neq v_t,v_{r_t}$, then $v_t,v_{r_t}$ are uniformly distributed among the pairs of vertices in $\bvt \setminus\{u\}$. 
To calculate the probability that $u$ is matched by time $t$  we separate the latter case into  two disjoint events: (i) $\mu_t(v_t) = u$, in which case $u$ is matched with probability 1; and (ii) $\mu_t(v_t) \neq u$, in which case $u$ is matched only if it was matched before time $t$, which is  $p(k,t-1)$ by induction. Thus, 
\begin{eqnarray}
\Prx{\matby{u}{t}\mid V_t=\bvt, u\neq v_t,v_{r_t} } & =&  \Prx{\mu_t(v_t) = u \mid V_t=\bvt, u\neq v_t,v_{r_t} } \cdot 1 \nonumber \\ & +& \Prx{\mu_t(v_t) \neq u \mid V_t=\bvt, u\neq v_t,v_{r_t} } \cdot p(k,t-1) \nonumber\\ & =&   \frac{1}{t-2}\cdot 1 + \frac{t-3}{t-2}p(k,t-1). \nonumber
\end{eqnarray}
Putting it all together we get:
\begin{multline*}
	\Prx{\matby{u}{t}\mid V_t=\bvt}  =   
(1-p(k,t-1)) \frac{1}{t}+  p(k,t-1) \frac{1}{t}  \\
+\frac{t-2}{t}\cdot \left(\frac{1}{t-2} + \frac{t-3}{t-2}\cdot p(k,t-1)\right) 
 =
\frac{2}{t} + \frac{t-3}{t} p(k,t-1)
 =  p(k,t). \label{eq:pkt}
\end{multline*}
This concludes the proof of Equation~\eqref{eq:matched}.

It remains to solve the recursion. We prove by induction that
\begin{equation}
p(k,t)=\frac{2}{3} \left( 1 - \frac{(t-3)! \cdot k!}{t! \cdot (k-3)!}\right).
\label{eq:pkt2}
\end{equation}
For $t=k$ this holds trivially, since $p(k,k)=0$.
For $t>k$, suppose \eqref{eq:pkt2} holds for $t-1$; then,
\begin{eqnarray*}
p(k,t) & \stackrel{\eqref{eq:recursion}}{=} & \frac{2}{t} + \frac{t-3}{t}\cdot p(k,t-1)\nonumber \\
& \stackrel{\eqref{eq:pkt2}}{=} & \frac{2}{t} + \frac{t-3}{t}\cdot \frac{2}{3} \left( 1 - \frac{(t-4)! \cdot k!}{(t-1)! \cdot (k-3)!}\right) \\
& = & \frac{2}{3} \left(  \frac{3}{t} + \frac{t-3}{t}\cdot \left( 1 - \frac{(t-4)! \cdot k!}{(t-1)! \cdot (k-3)!}\right) \right) \\ 
& = & \frac{2}{3} \left(  1 -   \frac{(t-3)! \cdot k!}{t! \cdot (k-3)!} \right),\\ 
\end{eqnarray*}
where the second equality holds by the induction assumption.
This concludes the proof of Lemma~\ref{lem:pr-u-matched}.
\end{proof}

With Lemma~\ref{lem:pr-u-matched} in hand, we are ready to prove Theorem~\ref{th:5-over-12}.

\begin{proof}
Given some $t \geq k+1$, let $\mu_t$ denote the maximum weighted matching in $G(V'_t)$, and let $\mu^*$ be the maximum weighted matching in $G$. 
We give a lower bound on the expected weight of the edge $e_t$. 
Since $e_t$ is a random edge chosen  uniformly at random among the $\lfloor t/2\rfloor$ edges in $\mu_t$, for every possible realization $\bvt'$ of $V'_t$ (note that $\bvt'$ is of size $t$ if $t$ is even, and of size $t-1$ if $t$ is odd), it holds that
$$
\Ex{\wet} = \frac{\Ex{\wmt}}{\lfloor t/2\rfloor}.
$$
Recall that $\mu^*|_{\bvt'} = \{ ij \in \mu^* | i,j \in \bvt' \}$ for the matching $\mu^*$ restricted to vertices in $\bvt'$.
Since $\mu_t$ is the maximum matching selected for the set of vertices $V'_t$, it holds that $\Ex{w(\mu_t)} \geq \Ex{w(\mu^*|_{V'_t})}$.
Consider next $\Ex{w(\mu^*|_{V'_t})}$. The number of edges in $V'_t$ is ${2\lfloor \frac{t}{2}\rfloor \choose 2}$. Since all edges are symmetric (by random arrival) and the total number of edges is ${n \choose 2}$, it holds that $\Ex{w(\mu^*|_{V'_t})} = \Ex{w(\mu^*)}{2\lfloor \frac{t}{2}\rfloor \choose 2}  / {n \choose 2}$. 
We get: 
\begin{eqnarray*}
	\Ex{\wet  } & \geq &  \frac{1}{\lfloor t/2\rfloor} \Ex{w(\mu^*|_{\bvt'})}\\
	& = & \frac{1}{\lfloor t/2\rfloor} \Ex{w(\mu^*)} {2\lfloor \frac{t}{2}\rfloor \choose 2}  / {n \choose 2} \\
	& = & \frac{4\cdot \lfloor t/2\rfloor - 2 }{n \cdot (n-1) } \Ex{w(\mu^*)}.
\label{eq:part-opt}
\end{eqnarray*}


Putting it all together, writing $\mu$ for the matching obtained by Algorithm~\ref{alg:vertex-arrival}, and $\mu^*$ for the optimal matching, we get:
$$
\frac{\Ex{w(\mu)}}{\Ex{w(\mu^*)}} =  \frac{1}{\Ex{w(\mu^*)}}\sum_{t=k+1}^n \Prx{\mu_t(v_t) \in V_{t} \cap A_t} \Ex{\wet},
$$
where $A_t$ denotes the set $A$ in the beginning of iteration $t$.

Substituting $\Prx{\mu_t(v_t) \in V_{t} \cap A} = 1-p(k,t-1)$, and applying Equation~\eqref{eq:part-opt}, we get
\begin{eqnarray}
\frac{\Ex{w(\mu)}}{\Ex{w(\mu^*)}} & = & \frac{1}{\Ex{w(\mu^*)}}\sum_{t=k+1}^n (1-p(k,t-1))\cdot\frac{4\cdot \lfloor t/2\rfloor -2 }{n \cdot (n-1) } \cdot {\Ex{w(\mu^*)}}  \nonumber \\
& \stackrel{\eqref{eq:pkt2}}{=} & \sum_{t=k+1}^n \left(1- \frac{2}{3} \left(  1 -   \frac{(t-4)! \cdot k!}{(t-1)! \cdot (k-3)!} \right)\right)\cdot\frac{4\cdot \lfloor t/2\rfloor -2 }{n \cdot (n-1) }  \nonumber \\
& \geq & \sum_{t=k+1}^n \left(\frac{1}{3}+  \frac{2\cdot (t-4)! \cdot k!}{3\cdot (t-1)! \cdot (k-3)!} \right)\cdot\frac{2t -4 }{n \cdot (n-1) }  \nonumber \\
& \geq & \sum_{t=k+1}^n \left(\frac{1}{3}+  \frac{2 (k-2)^3}{3\cdot t^3} \right)\cdot\frac{2t -4 }{n^2 }  \nonumber \\
& \geq & \frac{1}{n^2}\int_{k}^n \left(\frac{1}{3}+  \frac{2 (k-2)^3}{3\cdot (t+1)^3} \right)\cdot(2t -4)  \dd\,t\nonumber \\
& = & \frac{1}{n^2}\cdot \frac{1}{3}\left( -\frac{4(k-2)^3}{t+1}+ \frac{6(k-2)^3}{(t+1)^2} +(t+1)^2-6t\right) \Biggr|_k^n \nonumber \\
&=& -\frac{4k^3}{3n^3} +\frac{4k^2}{3n^2} +\frac{1}{3} - \frac{k^2}{3n^2} -o(1) \nonumber \\
& =& \frac{1}{3} +\frac{k^2}{n^2}-\frac{4k^3}{3n^3} -o(1), \nonumber
\end{eqnarray} 
where to get the first inequality, we used the bound $\lfloor t/2\rfloor \ge t/2-1/2$; to get the second inequality, we applied basic algebraic transformations to simplify each term under the summation; to get the third inequality, we estimated the integral $\int_{t=k}^{t=n}$ as a Riemann sum with the subdivision into equal intervals of length $1$ and used a simple upper bound on the function's value in each subdivision interval; in the last two equalities we collected all low-order terms in $o(1)$ notation (that vanishes when $n\to\infty$ and $k=\Theta(n)$).

The last expression attains its maximum at $k=\frac{n}{2}$, achieving a competitive ratio of $\frac{5}{12}-o(1)$.
\end{proof}

{\bf Remark:} Note that for every graph with three vertices, Algorithm~\ref{alg:vertex-arrival} always matches $v_3$ to the unique vertex in $\{v_1,v_2\} \setminus \{v_r\}$. Therefore, it always achieves a competitive ratio of $\frac{1}{3}$. However,
the modified algorithm that adds $m>>n$ auxiliary vertices that are connected to all vertices with zero weight edges, and then applies Algorithm~\ref{alg:vertex-arrival} (where the  auxiliary vertices are added at random times) gives a competitive ratio that approaches $\frac{5}{12}$ as $m$ goes to infinity for all $n$.

{\bf Remark:} Algorithm~\ref{alg:vertex-arrival} can be modified to a $5/24$-competitive algorithm in the ordinal setting. Observe that the only step Algorithm~\ref{alg:vertex-arrival} uses weights of the edges is for constructing a maximum weighted matching $\mu_t$ in line 12. We modify this step by constructing $\mu_t$ greedily. That is, we keep adding the largest edge to the matching until all vertices are matched. This procedure can be implemented using pairwise comparisons of edges. In this way, $\mu_t$ is a $2$-approximation to the maximum matching in $G(V_t')$ and we suffer an extra factor of $2$. The rest of our analysis remains intact. The resulting $5/24\approx 0.208$-competitive ratio improves upon the $\frac{e+1}{12e} \approx 0.114$ competitive ratio by \citet{HoeferK17}.


\section{Upper Bound for Vertex Arrival}
In this section we establish the following theorem showing that the competitive ratio of $\frac{5}{12}$ is tight.
\begin{theorem}
	\label{thm:upper-bound-vertex}
	No online algorithm has a competitive ratio better than $\frac{5}{12}$ for matching secretary with vertex arrival.
\end{theorem}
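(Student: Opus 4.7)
The plan mirrors the two-part outline sketched in Section~\ref{sec:tech}. First, we exhibit an explicit instance on a large complete graph $K_n$ such that any cardinal online algorithm $\alg$ on this instance can be converted, via Ramsey's theorem, into an ordinal algorithm $\alg'$ with essentially the same competitive guarantee. Second, we show that on this instance no ordinal algorithm can exceed $\tfrac{5}{12}$.

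For Step~1, I would take the weights to form a random permutation of a rapidly growing sequence such as $N^{1}, N^{2}, \ldots, N^{\binom{n}{2}}$ with $N \to \infty$. Under such weights the value of every matching is dominated by its heaviest edge, so the ratio $\Ex{w(\mu)}/w(\mu^*)$ is governed, up to an $o_N(1)$ term, by the probability that the algorithm captures the edge of globally largest rank. Given any cardinal $\alg$, I would then apply the infinite Ramsey theorem iteratively, in the spirit of Correa, D\"utting, Fischer and Schewior~\cite{correa2019prophet}: at arrival step $t$, color tuples of previously observed edge-weight sequences by the action that $\alg$ takes, and pass to an infinite monochromatic subfamily on which that action depends only on the relative order of the observed weights. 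Layering this through all $n$ arrivals yields an ordinal algorithm $\alg'$ matching the ratio of $\alg$ on the restricted family, up to vanishing terms. The extra combinatorial wrinkle relative to~\cite{correa2019prophet} is that each arrival reveals a \emph{vector} of weights rather than a scalar, so one must iteratively color tuples of tuples, handled by nested Ramsey applications.

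For Step~2, I would characterize the optimal ordinal policy on the instance above. Asymptotically only the single edge of largest rank matters, so the question becomes: what is the maximum probability with which an online ordinal algorithm can commit to matching the globally heaviest edge under a uniformly random vertex arrival of $K_n$? Via an exchange/dominance argument I would reduce to policies of a restricted skip-and-match form mirroring Algorithm~\ref{alg:vertex-arrival}: pass on the first $k$ vertices, and then at every subsequent round attempt to match $v_t$ to its partner in the ordinally maximum matching on $V_t$ whenever that partner is still available. For this restricted family the relevant matching probabilities are governed by the same recursion $p(k,t)$ isolated in Lemma~\ref{lem:pr-u-matched}, and the resulting objective is maximized at $k \approx n/2$, giving the limiting value $\tfrac{5}{12}$. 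Sending first $N\to\infty$ and then $n\to\infty$ delivers the claimed upper bound.

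The main obstacle is the dominance argument in Step~2: an ordinal algorithm has access to the full ordinal pattern among all previously revealed edges (not only those incident to the arriving vertex), so the policy space is combinatorially rich, and it is far from obvious a priori that a skip-then-greedy rule is optimal. I expect to have to do a careful case analysis on how an arbitrary ordinal rule can deviate from skip-then-greedy, and show each deviation is weakly dominated by staying inside the canonical family, at which point the asymptotic computation of the previous paragraph takes over. A secondary nuisance in Step~1 is bookkeeping the double limit in $n$ and $N$ so that the final bound is genuinely $\tfrac{5}{12}$, not $\tfrac{5}{12}+\varepsilon$ for some fixed $\varepsilon$.
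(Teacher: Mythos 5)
Your proposal captures the high-level shape of the paper's argument (Ramsey-style cardinal-to-ordinal reduction, then an exact bound for the ordinal problem), but it diverges from the paper at a point you yourself flag as the main obstacle, and the paper's resolution of that obstacle depends on a different choice of hard instance than the one you propose.

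You take the hard instance to be $K_n$ with edge weights a geometric sequence $N^1,\dots,N^{\binom n2}$ assigned arbitrarily to edges. The paper instead uses \emph{vertex-decomposable} weights: each vertex carries a value $\val_v$, and $\weight_{uv}=n^{3(\val_u+\val_v)}$. This seemingly small change is what makes the rest go through. With vertex-decomposable weights the ranking of all edges is induced by the ranking of vertex values, the globally heaviest edge is always the one joining the two top-valued vertices, and the set of ``top two vertices so far'' evolves with a nested structure as vertices arrive. That structure is precisely what powers Claims~\ref{cl:only_top_two} and~\ref{cl:history_independent}: one may assume the algorithm only ever considers matching the current top two vertices, and its decision may be taken to depend only on whether that match is still possible. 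Once the policy is reduced to a history-independent binary decision at each time, the Ramsey argument (Claim~\ref{cl:ramsey}) only has to make the per-step acceptance probabilities $f_i(\Val_i)$ nearly constant over subsets of a large ground set --- a much tamer object than ``tuples of tuples'' of observed edge values. Your instance does not enjoy this nesting, so the ordinal problem you end up with (capture the heaviest-ranked \emph{edge}, with full access to the ordinal pattern of all revealed edges) has a genuinely larger policy space, and the dominance reduction to skip-then-greedy is exactly the step that is missing. In the paper this step is not proved by an exchange/case analysis at all: after the reduction to history-independent binary policies $\vec{c}\in[0,1]^n$, Theorem~\ref{thm:ordinal-5-12} writes the success probability as a multilinear function of $\vec{c}$, computes $\partial f/\partial c_i$ in closed form, and uses a sign-monotonicity property of these derivatives (Claim~\ref{cl:derivative}) to conclude the optimum is a threshold policy. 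There is no ad hoc dominance argument to supply.

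A smaller but real slip: you say the resulting bound is governed by ``the same recursion $p(k,t)$ isolated in Lemma~\ref{lem:pr-u-matched}.'' That lemma analyzes the \emph{specific} Algorithm~\ref{alg:vertex-arrival} and computes the probability a given vertex has been matched by it. For the upper bound one needs a recursion over \emph{all} candidate policies; the paper's proof introduces new quantities $p_i$ (probability the current top vertex is still free) and $q_i=i\,p_i$ satisfying a $c_i$-dependent recursion, Equation~\eqref{eq:qi-recursive}, and optimizes over $\vec{c}$. These are not the same object, and conflating them would make the optimization step circular.

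So the gap is concrete: (i) with arbitrary geometric edge weights the ordinal problem lacks the nested top-two structure, so you cannot invoke anything like Claims~\ref{cl:only_top_two} and~\ref{cl:history_independent}; and (ii) even after that simplification, the correct closing argument is a derivative analysis of a multilinear objective, not a dominance/exchange case analysis. Switching to the paper's vertex-valued instance and replacing the dominance step by the multilinear-optimization argument closes both holes.
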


The proof of Theorem~\ref{thm:upper-bound-vertex} is composed of the following components:
\begin{enumerate}
	\item We introduce an {\em ordinal} variant of the matching secretary problem (Section~\ref{sec:ordinal}).
	\item We reduce our matching secretary problem to the ordinal variant (Section~\ref{sec:reduction}).
	\item We establish an upper bound of $5/12$ with respect to the ordinal variant (Section~\ref{sec:upper-bound-ordinal}).
\end{enumerate}

\subsection{The Ordinal Variant}
\label{sec:ordinal}

The {\em ordinal} variant of the matching problem is the following:
	\begin{itemize}
		\item A set of $n$ vertices are ranked, according to an unknown ranking, from 1st to $n$th. The 1st and 2nd vertices are referred to as the top two vertices.
		\item The vertices arrive sequentially, in a random order; let $v_1, \ldots, v_n$ denote the vertices in their arrival order.
		\item Upon the arrival of vertex $v_t$, the algorithm observes the relative rank of $v_t$ among $v_1, \ldots, v_t$ (its rank is in $\{1, \ldots, t\}$), and must decide immediately and irrevocably whether to match it to an earlier unmatched vertex.
		\item The objective is to maximize the probability of matching together the top two vertices. 
	\end{itemize}


In the remainder of this section, we refer to our original matching secretary setting as the \emph{cardinal} setting, and to this variant as the \emph{ordinal} setting. 
Note that the two settings differ both in (i) the assumption about what is observable (a vertex's weight in the cardinal setting versus its relative rank in the ordinal setting), and in (ii) the objective function (maximize the expected total weight in the cardinal setting versus maximize the probability to match the top two vertices in the ordinal setting). 
The reduction will go through a third variant, which we refer to as the {\em intermediate} setting, which shares properties with both variants, as will be explained in Section~\ref{sec:reduction}.

An algorithm in the ordinal and intermediate settings is said to be $\alpha$-competitive if it matches together the top two vertices with probability at least $\alpha$.

%

\paragraph{Ordinal vs. Cardinal Classical Secretary}
It is worthwhile to mention that the classical secretary problem has two variants as well: (i) the ordinal secretary problem, where the algorithm observes the relative rank of the arriving element, and aims to maximize the probability of selecting the best element, and (ii) the cardinal secretary problem, where each element is associated with a value, which is observed upon arrival, and the algorithm aims to maximize the expectd value of the selected element.

It is straightforward to see that any algorithm in the ordinal setting preserves its competitive ratio when applied to the cardinal setting. 
On the other direction, a folklore result says that the cardinal setting is not easier than the ordinal one (recall that the best competitive ratios in both settings is $\frac{1}{e}$). The upper bound for the cardinal setting is nicely explained in the recent work of \citet{CorreaDFS19}.

Our ordinal and cardinal variants for matching can be viewed as analogs of the cardinal and ordinal settings in the classic secretary problem. 
However, the matching setting is more involved in its combinatorial structure, and the multiple decisions that should be made. 
Indeed, it is not clear to us whether it is possible to adapt an algorithm for the ordinal setting to the cardinal matching setting (like in the classical secretary problem).
Nevertheless, we establish a reduction from the cardinal setting to the ordinal setting. 




\subsection{Reduction: From Cardinal to Ordinal (through Intermediate)}
\label{sec:reduction}



We shall focus on the following family of instances. An instance is described by a complete graph $G$ on $n$ vertices. Each vertex $v$ of the graph is associated with a value $\val_v \in \naturals$, where $\naturals$ is the set of positive integers. The weight of each edge $uv$ is determined by the values of the two endpoints:  $\weight_{uv} = n^{3(i+j)}$ where $i=\val_u, j=\val_v$ are the values of its two endpoints.
We assume that $\val_u \neq \val_v$ for every two distinct vertices $u,v$. Thus, every instance can be specified by a set of values $\Val \subset \naturals$ of size $n=|\Val|$ on the graph vertices. Let $\subsets{n}$ denote the set of all such subsets $\Val$ and in general $\subsets[T]{i}$ denote the set of all subsets $\Val\subset T$ with $|\Val|=i$.
Without loss of generality, we assume that the algorithm observes the vertex values directly, rather than the edge weights. 
When clear in the context we refer to $ij$ as the edge between vertices with values $i$ and $j$, and denote the weight of this edge by $\weight_{ij}$.

The reduction from the cardinal setting to the ordinal setting proceeds in three main steps; an overview is given in Figure~\ref{fig:proof-overview}.
In what follows, we give details for each step. 

\begin{figure}[H]
	\centering
	\includegraphics[width=0.7\textwidth]{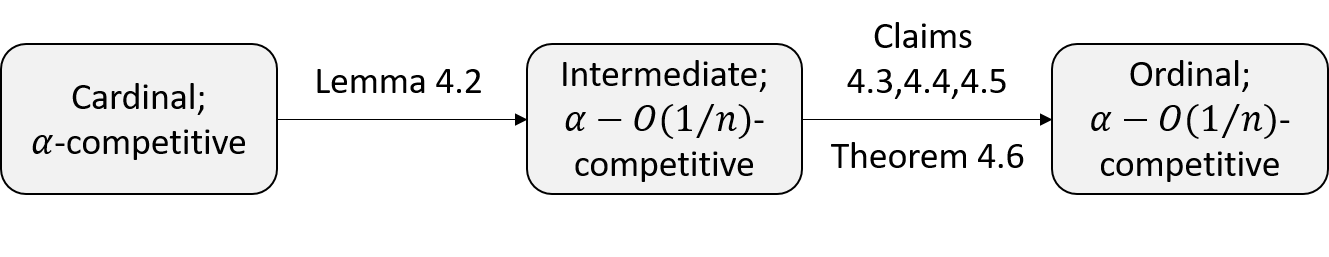}
	\caption{An overview of the proof steps. 
		\label{fig:proof-overview}}
\end{figure}

\paragraph{Step 1: Introducing the intermediate variant, and reducing the cardinal variant to the intermediate variant.}
The {\em intermediate} variant of the matching problem bridges between the cardinal and ordinal variants.
In the intermediate variant, every vertex $v$ is associated with value $\val_v$, which is observed upon $v$'s arrival (as in the cardinal settings), but the objective is to maximize the probability to match together the two vertices with the highest values (similar to the ordinal setting). 
The following lemma reduces the cardinal variant of the problem to the intermediate variant.

\begin{lemma}
\label{lem:obj_ordinal}
Let  $\Val\in \subsets{n}$. If $\alg(\Val) \ge \alpha \cdot \opt(\Val)$ in the cardinal setting, then $\alg$ matches the top two vertices in $\Val$ with probability $\alpha - O(\frac{1}{n})$.
\end{lemma}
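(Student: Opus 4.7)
The plan is to exploit the extreme growth of the weight function $\weight_{ij} = n^{3(i+j)}$: the edge between the two top-valued vertices is so much heavier than any other edge that \emph{any} matching avoiding it is at most an $O(1/n^2)$ fraction of $\opt$. Consequently, if $\alg$ fails to pair the top two vertices together with probability close to $\alpha$, it cannot collect an $\alpha$-fraction of $\opt(\Val)$.

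I would begin with the trivial lower bound $\opt(\Val) \ge n^{3(\val_n + \val_{n-1})}$, coming from taking the top edge alone (where $\val_1 < \val_2 < \cdots < \val_n$ denote the values sorted in increasing order). The crux is an upper bound on $w(M)$ for any matching $M$ that does \emph{not} include the edge $\val_n\val_{n-1}$. A short case analysis shows that every edge $ij \in M$ satisfies $i+j \le \val_n + \val_{n-2}$: if $ij$ avoids $\val_n$ then both endpoints are at most $\val_{n-1}$, so $i+j\le \val_{n-1}+\val_{n-2}\le \val_n+\val_{n-2}$; if $ij$ contains $\val_n$ then its other endpoint is at most $\val_{n-2}$, since $\val_{n-1}$ is excluded by assumption. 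Because the values are distinct positive integers, $\val_{n-1}\ge \val_{n-2}+1$, so $n^{3(\val_n + \val_{n-2})} \le n^{-3}\,n^{3(\val_n + \val_{n-1})} \le n^{-3}\opt(\Val)$. Summing over the at most $n/2$ edges of $M$ then yields $w(M) \le \tfrac{1}{2n^2}\opt(\Val)$.

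To finish, I would let $p$ denote the probability that $\alg$ matches the top two vertices to each other. Conditional on this event, $w(\alg)\le \opt(\Val)$ trivially; conditional on its complement, $\alg$'s output is some matching not containing the top edge, so the bound above applies pointwise. Taking expectation,
$$\alg(\Val) \;\le\; p\cdot \opt(\Val) + (1-p)\cdot \tfrac{1}{2n^2}\opt(\Val) \;\le\; \left(p + \tfrac{1}{2n^2}\right)\opt(\Val).$$
Combining with the hypothesis $\alg(\Val) \ge \alpha\cdot\opt(\Val)$ yields $p \ge \alpha - \tfrac{1}{2n^2} = \alpha - O(1/n)$, as required.

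I do not foresee any substantial obstacle; the entire argument is essentially book-keeping around the geometric gap of factor $n^3$ per unit increase of the exponent, which is precisely tuned in the weight scheme to absorb the $O(n)$ edges in any matching. The only care points are verifying the case analysis for $i+j \le \val_n + \val_{n-2}$ and ensuring the weight bound on matchings avoiding the top edge is valid pointwise (not only in expectation), so that the conditioning argument in the final display is sound.
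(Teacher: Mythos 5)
Your proof is correct and rests on the same key insight as the paper's: the weight scheme $w_{ij}=n^{3(i+j)}$ is tuned so that every non-top edge is at most a $1/n^3$ fraction of the top edge, which absorbs the polynomial number of other edges. The organizational difference is minor but worth noting: the paper directly decomposes $\alg(\Val)=\sum_{\{i,j\}}\Prx{ij\text{ selected}}\cdot w_{ij}$ and bounds every non-top term by $w_{i_1i_2}/n^3$ using only $\Prx{\cdot}\le 1$, summing over all $\binom{n}{2}-1$ edges to get an $O(1/n)$ error; you instead condition on the event that the top edge is selected, and in the complementary event exploit that the output is a matching of at most $n/2$ edges to obtain the sharper pointwise bound $w(M)\le\opt(\Val)/(2n^2)$. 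Your version buys an $O(1/n^2)$ error term rather than $O(1/n)$, at the cost of a short case analysis to show that every edge outside the top edge has exponent at most $\val_n+\val_{n-1}-1$; the paper's bound is cruder but requires no appeal to the matching structure. Both are valid and both suffice for the lemma.
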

\begin{proof}
Let $i_1, i_2$ be the two largest values of $\Val$. 
Consider the performance of the algorithm.
\begin{align}
\alg(\Val) = & \sum_{ \{i,j\} \in {\Val \choose 2}} \Prx{ij \text{ is selected}} \cdot \val_{i j} \nonumber \\
\le & \Prx{i_1i_2 \text{ is selected}} \cdot \val_{i_1 i_2} + \sum_{ \{i,j\} \in {\Val \choose 2}-\{i_1,i_2\}} \val_{i j} \nonumber \\
\le & \Prx{i_1i_2 \text{ is selected}} \cdot \val_{i_1 i_2} + \left({n \choose 2} - 1\right) \cdot \frac{\val_{i_1 i_2}}{n^3} \nonumber \\
= & \left( \Prx{i_1i_2 \text{ is selected}} + O\left(\frac{1}{n}\right) \right) \cdot \val_{i_1 i_2}, \label{eq:exp2prob}
\end{align}
where the second inequality follows from the fact that for every $i,j$, $\weight_{ij} = n^{3(i+j)} \le n^{3(i_1+i_2)-3} = \frac{\val_{i_1 i_2}}{n^3}$.

Finally, we have that $\alg(\Val) \ge \alpha \cdot \opt(\Val) \ge \alpha \cdot \val_{i_1 i_2}$. Combining this with equation~\eqref{eq:exp2prob} concludes the proof.
\end{proof}

Lemma~\ref{lem:obj_ordinal} reduces the cardinal setting to the intermediate setting. 
As such, it allows to change the objective in the cardinal setting to the objective in the ordinal setting. 
Note, however, that changing only the objective is not sufficient, as the online algorithm in the ordinal setting does not observe the values of the vertices, rather it observes only the relative ranking among them. 
In the following steps we reduce the intermediate variant to the ordinal variant.

\paragraph{Step 2: Simplifying the description of the algorithm for the intermediate variant.}
We now show that any $\alpha$-competitive online algorithm for the intermediate variant admits a simplified description. 
Claim~\ref{cl:only_top_two} shows that the algorithm's decision at each step $t$ is simply a binary decision, and 
Claim~\ref{cl:history_independent} shows that this decision is history independent.

\begin{claim}
\label{cl:only_top_two}
Let $\alg$ be any $\alpha$-competitive online algorithm in the intermediate setting. 
Then, there is an $\alpha$-competitive algorithm $\alg'$ that in every step $t$, may (but not necessarily) only match the top two vertices up to step $t$.
\end{claim}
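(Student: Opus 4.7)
The plan is to build $\alg'$ by a simulation-and-filter construction: $\alg'$ runs an internal ``ghost'' copy of $\alg$ and executes only those of $\alg$'s match decisions that involve the current top two vertices. Formally, at each time $t$, $\alg'$ observes $\val_{v_t}$, forwards this value to the ghost copy of $\alg$, and reads off the ghost's decision (either to match $v_t$ with some earlier unmatched $v_s$, or to leave $v_t$ unmatched). In its actual run, $\alg'$ executes this match \emph{only if} $\{v_t,v_s\}$ are the top two values among $\{v_1,\ldots,v_t\}$; otherwise it leaves $v_t$ unmatched (while the ghost copy still records the match internally for purposes of its future decisions). By construction, whenever $\alg'$ actually makes a match in step $t$, it matches precisely the top two vertices of $\{v_1,\ldots,v_t\}$, so $\alg'$ meets the structural restriction of the claim.

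The key invariant to verify is that, at every time $t$, the set of matches executed by $\alg'$ is a subset of the matches recorded by the ghost copy. This follows by a straightforward induction on $t$, since $\alg'$ only ever filters out some of the ghost's decisions. An immediate corollary is feasibility: whenever $\alg'$ decides to execute a match prescribed by the ghost, the prescribed partner $v_s$ is unmatched in the ghost, hence also available in $\alg'$'s actual run.

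The competitive guarantee is a coupling argument. Run $\alg$ in isolation and $\alg'$ on the same arrival sequence with the same internal randomness, so that the ghost copy inside $\alg'$ evolves identically, step by step, to the standalone $\alg$. Condition on a sample path on which $\alg$ succeeds, i.e.\ matches the overall top two vertices $T_1$ and $T_2$; letting $\tau_1<\tau_2$ be their arrival times, $\alg$ must match $v_{\tau_2}=T_2$ with $T_1$ at time $\tau_2$, which in particular requires $T_1$ to be available in $\alg$'s run at that moment. Since $\{T_1,T_2\}$ are the global top two, they are also the top two of $\{v_1,\ldots,v_{\tau_2}\}$, so $\alg'$'s filter passes and $\alg'$ decides to execute the match; by the subset invariant, $T_1$ is still available in $\alg'$'s actual run. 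Thus $\alg'$ succeeds on this sample path as well, giving $\Prx{\alg'\text{ succeeds}}\ge \Prx{\alg\text{ succeeds}}\ge \alpha$.

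The main obstacle is keeping the bookkeeping clean: there are three processes to track in parallel — the standalone $\alg$, the ghost copy of $\alg$ embedded in $\alg'$, and the actual match set of $\alg'$ — and the argument depends on aligning them correctly via a common coupling and on formulating the monotonicity invariant so that it simultaneously delivers both feasibility of $\alg'$'s actions and pointwise domination of the success event. Once the ghost/actual dichotomy inside $\alg'$ and its coupling to the standalone $\alg$ are set up precisely, the remainder of the argument is essentially a one-line deduction.
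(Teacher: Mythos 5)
Your ghost-copy construction is exactly the paper's argument: the paper's phrase ``proceeds as $\alg$ thereafter as if $u$ and $v$ were matched'' is the same simulation-and-filter idea, and the paper likewise concludes $\alg'(\Val)=\alg(\Val)$ (you only argue $\ge$, but equality follows by the same coupling). Correct, and essentially the same approach.
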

\begin{proof}
Suppose that at some step $t$, $\alg$ matches two vertices $u$ and $v$ which are not the top two vertices up to step $t$. Consider a modified algorithm $\alg'$ that in such cases does not match any vertices, and proceeds as $\alg$ thereafter as if $u$ and $v$ were matched. Then $\alg'(\Val)=\alg(\Val)$ for any set of values $\Val$.
\end{proof}

Thus, the algorithm's decision at time $t$ is just a choice between (at most) two options: match the top two vertices so far, or not match them. However, such decision may still be quite complex: it may be randomized, it may depend on the values of the available vertices, on the arrival order, or on the previous choices of the algorithm. Our next claim simplifies the algorithm further. It shows that an optimal online algorithm is \emph{history-independent}, i.e., the decision depends only on the vertex values and on whether matching the top two vertices is possible. 

\begin{claim}
\label{cl:history_independent}
Given set of vertex values $\Val_t$ at time $t$, 
we may assume without loss of generality that the algorithm's binary decision depends only on
\begin{enumerate}
\item whether the top two vertices in $\Val_t$ are not matched yet; and
\item whether the last arriving vertex is one of the top two vertices in $\Val_t$. 
\footnote{Interestingly, it does not matter whether the last arriving vertex is the highest or second-highest, only whether it is one of the top two vertices.}
\end{enumerate}
\end{claim}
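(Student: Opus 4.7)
My plan is to prove the claim by backward induction on $t$, using two structural ingredients. The first is that the intermediate objective is invariant under rank-preserving changes of the values; combined with the fact that a rank-only algorithm's success probability (over the random arrival order) does not depend on the specific instance $\Val$, this lets me reduce without loss to algorithms whose decision at time $t$ depends only on the observed rank permutation of $V_t$. Concretely, if $\alg$ is an arbitrary algorithm with worst-case success probability $c$ achieved on some $\Val^*$, then the rank-only algorithm that replays $\alg$'s behavior on $\Val^*$ achieves exactly $c$ on every instance, matching $\alg$'s competitive ratio. The second ingredient is the classical relative-ranking property: under uniform random arrival, the ranks of $v_s$ within $V_s$ for $s=1,\dots,n$ are mutually independent with the $s$-th rank uniform on $\{1,\dots,s\}$.

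For the induction, the base case $t=n$ is immediate: since $V_n=V$, the top two of $V_n$ are the globally top two, so the optimal action is to match whenever feasible, a function of (1) alone. For the inductive step I would assume that for every $s>t$ the optimal decision depends only on (1), (2), and $s$, and then show the same at $t$. By Claim~\ref{cl:only_top_two}, the only action at time $t$ to analyze is whether to match the current top two of $V_t$. The key observation is that any vertex matched in a previous round but not currently in the top two of $V_t$ must have value strictly below the second-highest in $V_t$, and hence cannot belong to the top two of any later $V_s$; so past matching decisions are irrelevant to future feasibility beyond the single bit captured by (1). Combining this with the independence of the future rank sequence from the past and with the inductive hypothesis that future decisions depend only on (1), (2), $s$, the expected future reward under optimal continuation becomes a function of only the post-action feasibility and $t$. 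Hence the value of each possible action at time $t$ depends only on (1), (2), $t$, and so does the optimal decision.

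The footnoted subtlety---that the optimal policy need not distinguish whether $v_t$ is the top or the second-top of $V_t$---falls out of the same analysis: in both sub-cases the current top two of $V_t$ is the same set $\{v_t,\text{the other}\}$, and once this set is fixed the future rank distribution and the (inductively simple) future policy are identical across the two sub-cases. I expect the main obstacle to be rigorously verifying the ``irrelevance of previously matched vertices'' observation together with the decoupling step that turns the expected future reward into a function of post-action feasibility and $t$ only; once those two pieces are in place, the inductive step closes routinely and yields the claimed canonical form of the algorithm.
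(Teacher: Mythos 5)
Your approach is genuinely different from the paper's, but it contains a real gap in the first ingredient, and it also proves more than Claim~\ref{cl:history_independent} actually asserts. Two points.

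First, Claim~\ref{cl:history_independent} is a statement about the \emph{intermediate} setting: it only claims that the decision can be made a function of the observed value set $\Val_t$ together with the two bits, i.e., it removes the dependence on the history (arrival order and past coin flips) but explicitly \emph{keeps} the dependence on the actual values. Removing the value dependence is a separate step in the paper (Claim~\ref{cl:ramsey}, via Ramsey's theorem), and it is nontrivial. By front-loading a reduction to rank-only algorithms, you are attempting to prove Claims~\ref{cl:history_independent} and~\ref{cl:ramsey} at once, which is a heavier lift than what is asked here. The paper's own proof is a coupling argument: for a fixed $\Val_t$, once conditions~1 and~2 hold the algorithm may forget its actual history $H$ and sample a contrafactual history $H'$ from the set $\dhistory(t,\Val_t)$ with the correct conditional law; since past decisions on such histories never touched the top two vertices of $\Val_t$, the future reward has the same distribution under $H$ and $H'$, and the algorithm's expected performance is unchanged. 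This works entirely within the intermediate setting, with no reduction to ranks.

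Second, the ``replay on $\Val^*$'' reduction has a concrete flaw. To feed $\alg$ values while only observing ranks, your rank-only simulator must maintain an online sample $\tilde\sigma(1),\ldots,\tilde\sigma(t)\subset\Val^*$ consistent with the observed rank sequence. On the real input $\Val'$ the rank $r_t$ of $v_t$ is uniform on $\{1,\ldots,t\}$ and \emph{independent} of $r_1,\ldots,r_{t-1}$, hence independent of whatever sample $\tilde\sigma(1..t-1)$ you have committed to. But on the target $\Val^*$, the rank of the $t$-th arrival is \emph{not} independent of the identity of the first $t-1$ arrivals: conditioned on a fixed prefix set $S_{t-1}\subset\Val^*$ the next rank is distributed according to the gap sizes of $S_{t-1}$ inside $\Val^*$ (e.g.\ for $\Val^*=\{1,2,3\}$ and $S_1=\{1\}$, the next rank is deterministically $1$; for $S_1=\{3\}$ it is deterministically $2$). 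Only the \emph{marginal} of $r_t$ is uniform, not the conditional given $S_{t-1}$. So the joint law of $(\tilde\sigma(1..t-1),r_t)$ produced by your replay differs from the joint law of $(\sigma(1..t-1),r_t)$ under a genuine uniform arrival on $\Val^*$, and the gap may even be empty (making the extension step ill-defined). Consequently the replayed algorithm does \emph{not} reproduce $\alg$'s behavior on $\Val^*$, and the reduction to rank-only algorithms does not go through as stated. This is precisely the obstruction the paper circumvents with the Ramsey-theoretic argument, and it is why the two reductions are done in the order (history-independence first, then rank-only) rather than the reverse.

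Your second ingredient and the backward-induction skeleton are sound in spirit: the observation that any matched vertex not among the top two of $V_t$ can never re-enter the top two, and hence that all past matching information is summarized by a single feasibility bit, is correct and is essentially the same structural fact the paper exploits (and uses again in Theorem~\ref{thm:ordinal-5-12} when working only with $O_i$). Likewise your explanation of the footnote is correct. But because of the flawed reduction, the quantity you want to induct on (expected future reward) is not well-defined as a function of the observed state in the intermediate setting---it depends on the unknown $\Val\setminus\Val_t$---so the DP cannot be closed without first resolving the value dependence. The paper's contrafactual-history coupling avoids this issue entirely by comparing two runs of the algorithm on the \emph{same} fixed instance.
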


\begin{proof}
To be formal, let us denote 
the arrival order of the vertices in $\Val_t$ and prior decisions of the algorithm as a history $H(t,\Val_t)$. Consider any optimal online algorithm $\alg$. We are going to modify $\alg$ so that it becomes \emph{history-independent} at time $t$ with the same performance guarantee.

Fix the set of values $\Val_t$. We consider the set $\dhistory(t, \Val_t)$ of all histories $H(t,\Val_t)$ for which the online algorithm has an option to match the two vertices with the highest values in $\Val_t$, i.e., situations that satisfy both conditions 1 and 2 from the statement of the Claim~\ref{cl:history_independent}. Let us define an online algorithm $\alg'$ that for every observed history $H(t,\Val_t)\in\dhistory(t,\Val_t)$ forgets the real history and instead generates a contrafactual history $H'(t,\Val_t)$ drawn independently at random from $\dhistory(t, \Val_t)$, i.e., we generate a history $H'\in \dhistory(t, \Val_t)$ with the probability proportional to $\prob{H(t, \Val_t) = H' ~~|~~ \Val_t}$. We keep $\alg'$ unchanged for any other set of values $\Val_t$ or history $H\notin\dhistory(t, \Val_t)$.
Starting from time $t$ the algorithm $\alg'$ assumes that the real history was $H'$ (not $H$) and makes all its future decisions according to $H'$. Note that by Claim~\ref{cl:only_top_two} $\alg'$ would be a feasible algorithm. The performance of $\alg'(H ~|~\Val_t)=\alg(H' ~|~\Val_t)$, since all previous decisions of $\alg$ before time $t$ did not affect the two highest value vertices in $\Val_t$ and, therefore, did not affect the two highest value vertices in $\Val$. Furthermore, the distribution of the contrafactual histories $H'$ coincides by construction with the distribution of the actual histories $H\in\dhistory(t, \Val_t)$. Thus the expected performance of $\alg'$ is the same as $\alg$. Note that $\alg'$ is history independent for the given set of values $\Val_t$.

To conclude the proof, we need to apply the above transformation of an optimal algorithm $\alg$ for every possible set of values $\Val_t$ and every time $t\in [n]$. That can be easily done using, e.g., a backward induction on $n$.
\end{proof}

By Claims~\ref{cl:only_top_two} and~\ref{cl:history_independent} we may restrict our attention to history-independent, binary-decision online algorithms. Any such algorithm can be characterized by a collection of $n$ set functions  $f_i : \subsets{i} \to [0,1]$ for each $i\in[n]$, where for every $\Val_i\in\subsets{i}$, $f_i(\Val_i)$ denotes the probability that the algorithm matches the top two vertices in $\Val_i$ given that it is {\em possible}; i.e., that (i) one of these vertices arrives at time $i$, and (ii) both of them are unmatched. 

\paragraph{Step 3: Reducing the intermediate variant to the ordinal variant.}
To complete the reduction from the intermediate setting to the ordinal setting, we use a similar approach to the upper bound proof from \citet{CorreaDFS19}. We fix an $\alpha$-competitive online algorithm $\alg$ for the intermediate variant, which is represented by a set of functions $f_i$ for every $i\in [n]$.

Our goal is to find an infinite (or sufficiently large) subset $T \subseteq \naturals$ of values on which the algorithm's decisions do not depend on the actual values of the vertices, i.e., every function $f_i$ is a constant on each $\Val_i\in\subsets[T]{i}$. In this case $\alg$ does not use any information about the actual values for every set of values $\Val\subset T$ with $|\Val|\le n$. That is, we can use $\alg$ in the ordinal setting.
We achieve this goal within an arbitrary small additive error $\eps$, i.e., $f_i(\Val_i)=c_i \pm O(\eps)$ for every $i\in [n]$, and every $\Val_i\in\subsets[T]{i}$. 

\begin{claim}
\label{cl:ramsey}
For any collection of set functions $f_i:\subsets{i}\to [0,1]$, $i\in[n]$ and any $\eps>0$ there is an infinite set $T\subset\naturals$ and constants $c_1,\ldots,c_n\in[0,1]$, s.t. $f_i(\Val_i)=c_i+O(\eps)$ for all $\Val_i\in\subsets[T]{i}$, $i\in[n]$.
\end{claim}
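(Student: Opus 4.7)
The plan is to apply the infinite Ramsey theorem iteratively, once for each $i\in[n]$, reducing the set of admissible values at each step. The key observation is that each $f_i$ takes values in a bounded interval $[0,1]$, which we can discretize into finitely many buckets; this turns "$f_i$ is approximately constant" into a finite-coloring statement to which the infinite Ramsey theorem applies directly.

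Concretely, I would fix a partition of $[0,1]$ into $\lceil 1/\eps\rceil$ intervals $I_1,\ldots,I_{\lceil 1/\eps\rceil}$ of length at most $\eps$. Starting from $T_0 = \naturals$, I would proceed inductively for $i=1,2,\ldots,n$ as follows: given an infinite set $T_{i-1}\subset\naturals$, define a finite coloring $\chi_i : \subsets[T_{i-1}]{i} \to \{1,\ldots,\lceil 1/\eps\rceil\}$ by letting $\chi_i(\Val_i)$ be the index of the interval containing $f_i(\Val_i)$. By the infinite Ramsey theorem applied to $k$-uniform hypergraph colorings with $k=i$, there exists an infinite monochromatic subset $T_i \subseteq T_{i-1}$; all $\Val_i\in\subsets[T_i]{i}$ then lie in a common interval of length $\eps$, so $f_i$ is constant on $\subsets[T_i]{i}$ up to an additive $\eps$. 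Setting $T \eqdef T_n$ and $c_i$ equal to the midpoint of the chosen interval for $f_i$ yields the desired conclusion.

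The verification that this works is straightforward: $T \subseteq T_i$ for every $i \leq n$, and $T$ is infinite since each Ramsey step preserves infiniteness. On $T$, each $f_i$ satisfies $|f_i(\Val_i) - c_i| \leq \eps$ for all $\Val_i\in\subsets[T]{i}$, which is exactly $f_i(\Val_i) = c_i + O(\eps)$.

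The only subtlety — and the main conceptual point — is that we need the full strength of the \emph{infinite} Ramsey theorem (for $i$-uniform hypergraphs, for each $i$ up to $n$), rather than its finite version: we want $T$ itself to remain infinite after $n$ successive restrictions, so that later in the argument one can freely choose arbitrarily large subsets $\Val\subset T$ of any size $\leq n$ and still have the algorithm's behavior controlled by the constants $c_i$. No new ideas beyond this standard iterated-Ramsey template are required, which is why this step mirrors the corresponding step in~\cite{correa2019prophet}.
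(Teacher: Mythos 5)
Your proof is correct and follows essentially the same iterated-Ramsey strategy as the paper: discretize $[0,1]$ into $O(1/\eps)$ buckets, color the $i$-uniform hyperedges accordingly, and apply the infinite Ramsey theorem once for each $i$ to extract a nested chain of infinite monochromatic sets. The only cosmetic difference is that the paper sets $c_i$ to the left endpoint $\eps\cdot C_i$ of the chosen bucket while you use the midpoint, which is immaterial under the $O(\eps)$ tolerance.
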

\begin{proof}
The proof uses the infinite version of Ramsey theorem. We find such a set $T$ iteratively for $i\in[k]$, starting with $k=1$ and up to $k=n$. We proceed by induction on $k$. The base of the induction is the case of $k=0$, which holds trivially. 
Suppose, by the induction hypothesis, that we have an infinite set $T_{k}\subset \naturals$ and a set of constants $c_1,\ldots,c_k$ such that
$f_i(\Val_i)=c_i+O(\eps)$ for all $\Val_i\in\subsets[T_k]{i}$ and $i\in[k]$. Our goal is to find an infinite subset $T_{k+1}\subset T_k$
that satisfies the desired condition for $f_{k+1}$. Consider a complete hyper-graph on the set of vertices $T_k$ with hyper-edges of size $k+1$. Each edge $\Val\subset T_k$,  $|\Val|=k+1$, is colored in one of $1/\eps$ colors: assign color $\left\lfloor f_{k+1}(\Val)/ \eps\right\rfloor$ to the edge $\Val$. By the infinite version of Ramsey theorem \cite{ramsey2009problem} 
, this hyper-graph admits an infinite monochromatic clique. Let the color of such a clique be $C_{k+1}$, and let the set of vertices in the clique be $T_{k+1}\subset T_k$. Set the constant $c_{k+1}=\eps\cdot C_{k+1}$. Then  $c_{k+1}\le f_{k+1}(\Val) < c_{k+1}+\eps$ for any $\Val\in\subsets[T_{k+1}]{k+1}$, i.e., $f_{k+1}(\Val)=c_{k+1}+O(\eps)$ for any $\Val\subset T_{k+1}$, $|\Val|=k+1$.
\end{proof}

With this, we can conclude the reduction from the cardinal setting to the ordinal setting. 
As a bonus, we also reveal useful properties of an optimal algorithm in the ordinal setting.

\begin{theorem}
\label{th:reduction_ordinal}
Let $\alg$ be an $\alpha$-competitive online algorithm for the cardinal matching setting. 
Then, there is a $\alpha-O(\frac{1}{n})$-competitive online randomized algorithm $\alg^o$ for the ordinal setting. 
Moreover, in every time $t$, $\alg^o$ makes a binary decision on whether to match the two top vertices up to time $t$, and this decision may only depend on the time $t$ and on whether this matching is possible.
\end{theorem}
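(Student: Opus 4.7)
The plan is to chain together the three steps already outlined after Theorem~\ref{thm:upper-bound-vertex}: pass from the cardinal variant to the intermediate variant via Lemma~\ref{lem:obj_ordinal}, simplify the algorithm via Claims~\ref{cl:only_top_two} and~\ref{cl:history_independent}, and finally extract an ordinal algorithm via the Ramsey-theoretic Claim~\ref{cl:ramsey}.

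Start with an $\alpha$-competitive cardinal algorithm $\alg$ and restrict attention to the family of instances parametrized by $\Val\in\subsets{n}$ with edge weights $w_{ij}=n^{3(i+j)}$. By Lemma~\ref{lem:obj_ordinal}, $\alg$ matches the top two vertices of $\Val$ with probability at least $\alpha-O(1/n)$, so $\alg$ is $(\alpha-O(1/n))$-competitive in the intermediate variant on this family. Now apply Claim~\ref{cl:only_top_two} and then Claim~\ref{cl:history_independent} (inductively on $t\in[n]$, exactly as suggested at the end of the latter's proof) to replace $\alg$ with an equivalent history-independent, binary-decision algorithm encoded by set functions $f_t:\subsets{t}\to[0,1]$ for $t\in[n]$, where $f_t(\Val_t)$ is the probability that the top two vertices of $\Val_t$ are matched at time $t$ whenever such a match is still possible.

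Fix a parameter $\eps>0$ to be chosen later and invoke Claim~\ref{cl:ramsey} to obtain an infinite set $T\subset\naturals$ and constants $c_1,\ldots,c_n\in[0,1]$ such that $f_t(\Val_t)=c_t+O(\eps)$ for every $t\in[n]$ and every $\Val_t\in\subsets[T]{t}$. Define the ordinal algorithm $\alg^o$ as follows: at each time $t$, if the newly arrived vertex is among the two top-ranked vertices in $V_t$ and both of these vertices are still unmatched, then match them with independent probability $c_t$; otherwise do nothing. By construction $\alg^o$ uses only relative ranks, so it is a valid ordinal algorithm, and its decision rule depends only on $t$ and on whether matching the current top two is possible, matching the structural claim of the theorem.

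To bound the performance of $\alg^o$, fix any $n$-element value set $\Val\subset T$ and couple $\alg^o$ with $\alg$ on $\Val$ in the intermediate variant. Both are history-independent and binary, and they differ only in that $\alg^o$ uses $c_t$ in place of $f_t(V_t)$, which by Claim~\ref{cl:ramsey} differs by $O(\eps)$ at each of the at most $n$ decision steps. A union bound over these steps gives
\[
\bigl|\Prx{\alg^o \text{ matches the top two of } \Val}-\Prx{\alg \text{ matches the top two of } \Val}\bigr|=O(n\eps).
\]
Choosing $\eps=1/n^2$ and combining with the guarantee obtained from Lemma~\ref{lem:obj_ordinal} shows that $\alg^o$ matches the top two vertices of $\Val$ with probability at least $\alpha-O(1/n)$. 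Since $\alg^o$ is purely ordinal, its success probability on a uniformly random arrival order is determined by the ranking alone and is the same on any $n$-vertex instance, establishing the claimed $(\alpha-O(1/n))$-competitive ratio in the ordinal setting.

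The main obstacle, beyond cleanly chaining the three prior results, is calibrating $\eps$ against $n$: Claim~\ref{cl:ramsey} only guarantees the constants $c_t$ up to an additive $O(\eps)$ slack, and this slack compounds over the $n$ decision times in the coupling above. Shrinking $\eps$ to $1/n^2$ keeps the accumulated error within the $O(1/n)$ tolerance already absorbed by the value-based reduction of Lemma~\ref{lem:obj_ordinal}; a second minor subtlety is verifying that the inductive application of Claim~\ref{cl:history_independent} preserves competitiveness across all $n$ steps simultaneously, which follows from the backward induction on $t$ referenced in its proof.
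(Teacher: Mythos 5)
Your proof is correct and follows essentially the same route as the paper: Lemma~\ref{lem:obj_ordinal} to pass to the intermediate variant, Claims~\ref{cl:only_top_two} and~\ref{cl:history_independent} to get a history-independent binary-decision representation via the $f_t$'s, Claim~\ref{cl:ramsey} with $\eps=O(1/n^2)$ to extract constants $c_t$ on an infinite $T$, and a union bound over the $n$ steps to control the accumulated $O(n\eps)$ error. The closing remark that a purely ordinal algorithm's success probability is instance-independent (so the guarantee on $\Val\subset T$ transfers to all $n$-vertex instances) is a small explicit clarification that the paper leaves implicit, but it does not change the argument.
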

\begin{proof}
We use Lemma~\ref{lem:obj_ordinal} to convert $\alg$ into an $\alpha-O(\frac{1}{n})$-competitive algorithm $\alg^{\text{int}}$ for the intermediate setting. By Claims~\ref{cl:only_top_two} and~\ref{cl:history_independent} we can assume without loss of generality that $\alg^{\text{int}}$ makes decisions that depend only on the set of values $\Val_t$ at time $t$ and the possibility of matching the top two vertices in $\Val_t$. Finally, we can choose $\eps$ in Claim~\ref{cl:ramsey} to be $\eps=O(\frac{1}{n^2})$ and find set $T\subset \naturals$ such that $f_i(\Val_i)=c_i+O(\eps)$ for every $\Val_i\in\subsets[T]{i}$ and $i\in[n]$. 
Define the algorithm $\alg^o$ for the ordinal setting as follows: 
Upon the arrival of the $i$th vertex, match the two top vertices up to time $i$ with probability $c_i$, if this is possible. 
By the union bound,for every set $\Val\in\subsets[T]{n}$, $\alg^o(\Val)=\alg^{\text{int}}(\Val)+n\cdot O(\eps)=\alg^{\text{int}}(\Val)+O(\frac{1}{n})$ for every set. 
Thus, $\alg^o$ is an $\alpha-O(\frac{1}{n})$-competitive online algorithm for the ordinal setting, whose binary decisions depend only on the time $t$ and the possibility of matching the two top vertices in $\Val_t$ at time $t$.
This concludes the proof
\end{proof}

\subsection{An Upper Bound for the Ordinal Setting}
\label{sec:upper-bound-ordinal}


In this section we study the ordinal setting. 
Based on the analysis of Section~\ref{sec:reduction}, we restrict ourselves, without loss of generality, to algorithms that decide at each step whether to match the top two vertices so far, and the decision depends only on the time $t$ and whether this matching is possible.
Such an algorithm can be fully characterized by a vector $\vec{c} \in [0,1]^n$, where $c_i$ is the probability that the algorithm matches the top two vertices so far at step $i$, given that it is possible to match them. 
	
Our main theorem in this section is an upper bound of $5/12$ on the competitive ratio of any algorithm in the ordinal setting.
\begin{theorem}
	\label{thm:ordinal-5-12}
	In the oridinal setting, for any $\vec{c} \in [0,1]^n$, the corresponding algorithm matches the top two vertices with probability at most $\frac{5}{12} + O(\frac{1}{n})$.
\end{theorem}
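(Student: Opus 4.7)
The plan is to parameterize any online algorithm by the vector $\vec{c}\in[0,1]^n$ from Theorem~\ref{th:reduction_ordinal}, derive a clean closed-form for the success probability $F(\vec{c})$, and then maximize over $\vec{c}$. The central auxiliary quantity will be
\[
Q_s \;:=\; \Pr\bigl[\text{the top of } \{v_1,\ldots,v_s\} \text{ is unmatched at the end of step } s\bigr],
\]
taken over both the random arrival order and the algorithm's coins. Because the algorithm consults only relative ranks, conditioning on the event $\{M = m\}$ (with $T_1, T_2$ the arrival times of the overall top two and $M := \max(T_1,T_2)$) leaves the joint distribution of relative ranks within $\{v_1,\ldots,v_{m-1}\}$ unchanged, so the conditional probability that the max of $\{v_1,\ldots,v_{m-1}\}$ is unmatched at the end of step $m-1$ is exactly $Q_{m-1}$. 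A successful match of the overall top two can only happen at step $M$, and requires that the earlier of the two is still unmatched at that moment; combining this with $\Pr[M=m]=\tfrac{2(m-1)}{n(n-1)}$, I first establish
\[
F(\vec{c}) \;=\; \sum_{m=2}^n \frac{2(m-1)}{n(n-1)}\, c_m\, Q_{m-1}.
\]

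The key step is a linear recursion for $Q_s$. Conditioning on the position $\pi\in\{1,\ldots,s\}$ of the max of $\{v_1,\ldots,v_s\}$ (uniform by symmetry), at step $\pi$ the max arrives and becomes the new top; its then-second-top is the max of $\{v_1,\ldots,v_{\pi-1}\}$, which by self-similarity is unmatched with probability $Q_{\pi-1}$, so the new top is matched at step $\pi$ with probability $c_\pi Q_{\pi-1}$. For each $s'\in(\pi,s]$, the arriving $v_{s'}$ becomes the new second-top with independent probability $1/(s'-1)$, and on that event the algorithm matches (killing the top) with probability $c_{s'}$. Averaging the resulting survival probability over $\pi$ and separating the $\pi=s$ term from the rest (which factors as $(1-c_s/(s-1))(s-1)Q_{s-1}$) gives the recursion
\[
sQ_s \;=\; 1 + (s-1-2c_s)Q_{s-1}, \qquad Q_1 = 1,
\]
equivalently $sQ_s - (s-1)Q_{s-1} = 1 - 2c_s Q_{s-1}$.

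Substituting $2c_m Q_{m-1} = 1 - mQ_m + (m-1)Q_{m-1}$ in the formula for $F$ and summing by parts collapses the expression into the compact identity
\[
F(\vec{c}) \;=\; \tfrac{1}{2} - Q_n + \frac{1}{n(n-1)}\sum_{m=1}^{n-1} m\,Q_m.
\]
The only constraint from $\vec{c}\in[0,1]^n$ that I actually need is $c_s\le 1$, which via the recursion is equivalent to $sQ_s \ge (s-3)Q_{s-1}+1$. Multiplying by $(s-1)(s-2)$ shows that $T_s := s(s-1)(s-2)Q_s$ satisfies $T_s - T_{s-1} \ge (s-1)(s-2)$, and summing from $s$ to $n$ via the hockey-stick identity, together with the trivial bound $Q_s\le 1$, gives the pointwise estimate
\[
Q_s \;\le\; \min\Bigl(\,1,\ \tfrac{1}{3} + \tfrac{n(n-1)(n-2)(3Q_n-1)}{3\,s(s-1)(s-2)}\,\Bigr).
\]
Plugging this into the identity for $F(\vec{c})$ and approximating the resulting Riemann sums by integrals (which introduces an $O(1/n)$ error), and letting $\lambda\in[0,1]$ be defined by $\lambda^3 := (3Q_n-1)/2$, a short algebraic manipulation gives $F(\vec{c}) \le \tfrac{1}{3} + \lambda^2 - \tfrac{4}{3}\lambda^3 + O(1/n)$. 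The right-hand side is maximized over $\lambda\in[0,1]$ at $\lambda^* = 1/2$ with value $\tfrac{1}{3}+\tfrac{1}{4}-\tfrac{1}{6} = \tfrac{5}{12}$, which yields the claimed bound.

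The main obstacle I expect is the careful justification of the recursion for $Q_s$: one must verify that under the conditioning on $\pi$ being the position of the max of $\{v_1,\ldots,v_s\}$, the previous-max-unmatched event at step $\pi$ is truly governed by the same probability $Q_{\pi-1}$, and that the new-second-top events across $s'\in(\pi,s]$ are mutually independent $\mathrm{Bernoulli}(1/(s'-1))$ random variables; both facts hinge on the algorithm being a function of relative ranks only, together with standard record-theoretic symmetry. A secondary technical point is the $O(1/n)$ bookkeeping when passing from the discrete sum $\sum_m mQ_m$ and the discrete inequality $T_s - T_{s-1}\ge (s-1)(s-2)$ to their continuous counterparts, which is routine but must be tracked. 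Notably, the upper bound is attained by the ``explore then exploit'' choice $c_s = 0$ for $s \le n/2$ and $c_s = 1$ for $s > n/2$ (which corresponds to $\lambda = 1/2$), mirroring the ordinal analog of the algorithm of Theorem~\ref{th:5-over-12}.
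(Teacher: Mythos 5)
Your proposal is correct, and while it starts from the same building blocks as the paper --- the survival probability $Q_s$ (the paper's $p_s$), the recursion $sQ_s = 1 + (s-1-2c_s)Q_{s-1}$, and the success formula $F(\vec c)=\tfrac{1}{\binom{n}{2}}\sum_{m}(m-1)c_m Q_{m-1}$ --- it then diverges in an interesting way. The paper sets $q_i=ip_i$, computes $\partial F/\partial c_i$ explicitly, shows $F$ is affine in each $c_i$, and proves a monotonicity claim on the derivatives to conclude that the optimal $\vec c$ is a threshold rule; it then solves the recursion in closed form for that rule and evaluates. You instead eliminate $\vec c$ from the objective entirely: the substitution $2c_mQ_{m-1}=1-mQ_m+(m-1)Q_{m-1}$ and an Abel summation give the compact identity $F=\tfrac12 - Q_n + \tfrac{1}{n(n-1)}\sum_{s<n} sQ_s$, and the single constraint $c_s\le 1$ yields the telescoping inequality $T_s-T_{s-1}\ge (s-1)(s-2)$ for $T_s=s(s-1)(s-2)Q_s$, hence a pointwise bound $Q_s\le \min\bigl(1,\tfrac13+\tfrac{n(n-1)(n-2)(3Q_n-1)}{3s(s-1)(s-2)}\bigr)$. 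Plugging in and optimizing over the free parameter $\lambda^3=(3Q_n-1)/2\in[0,1]$ reproduces $\tfrac13+\lambda^2-\tfrac43\lambda^3\le \tfrac{5}{12}$. What your route buys is the elimination of the structural step (no need to argue the optimum is a threshold, no Claim~\ref{cl:derivative}); the bound holds uniformly over all $\vec c$ by a direct pointwise estimate, which is arguably cleaner and exposes the cubic $g(\lambda)=\tfrac13+\lambda^2-\tfrac43\lambda^3$ without first guessing the optimal policy. What the paper's approach buys is that it also \emph{identifies} the extremal policy (explore-then-exploit at $\ell\approx n/2$), which is useful as a sanity check against Theorem~\ref{th:5-over-12}; you recover this only as a remark. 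One small note: your re-derivation of the recursion via conditioning on the arrival time of $\fst_s$ and tracking record-type events is more elaborate than the paper's single-step case analysis on $v_i\in\{\fst_i,\snd_i\}$ vs.\ not, so you create a records-independence obligation that the paper's derivation avoids; since both yield the identical recursion, the simpler one-step conditioning is preferable in a final write-up.
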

\begin{proof}
Let $v_i$ denote the vertex that arrives at time $i$, and let $\fst_i,\snd_i$ denote the respective top and second-top vertices among $v_1, \ldots, v_i$. 
Let $O_i$ be the event that $\fst_i$ remains unmatched by the end of step $i$, and let $p_i = \Prx{O_i}$, where the randomness is taken over the arrival order of the first $i$ vertices. 
Clearly, $p_1=1$. In what follows, ``$\matat{e}{t}$" denotes the event that edge $e$ is matched exactly at time $t$.
For $i>1$, we can express $p_i$ by the following recursive formula:  
\begin{align}
p_{i} = & \Prx{O_{i} \wedge (v_i = \fst_i) } + \Prx{O_{i} \wedge (v_i = \snd_i)} + \Prx{O_{i} \wedge (v_i \ne \fst_i,\snd_i)} \nonumber \\
= & \frac{1}{i} \cdot \Prx{O_{i} \big| v_i=\fst_i} + \frac{1}{i} \cdot \Prx{O_{i} \big| v_i = \snd_i} + \frac{i-2}{i} \cdot \Prx{O_{i} \big| v_i \ne \fst_i,\snd_i} \nonumber \\
= & \frac{1}{i} \cdot \Prx{\widebar{O_{i-1}} \vee \overline{\matat{v_i \fst_{i-1}}{i}} ~\big|~ v_i=\fst_i} + \frac{1}{i} \cdot \Prx{O_{i-1} \wedge \overline{\matat{v_i \fst_{i-1}}{i}}~ \big|~ v_i=\snd_i} \nonumber \\
& + \frac{i-2}{i} \cdot \Prx{O_{i-1}} \nonumber \\
= & \frac{1}{i} \cdot (1-p_{i-1}c_i) + \frac{1}{i} \cdot p_{i-1} (1-c_i) + \frac{i-2}{i} \cdot p_{i-1} \nonumber \\
= & \frac{1}{i} \cdot (1+(i-1)p_{i-1} - 2p_{i-1}c_i). \label{eq:pi}
\end{align}
The first equality follows by considering three disjoint cases for vertex $v_i$; the second equality holds since the distribution over the arrival order of $\{v_j\}_{j\in[i]}$ is uniform; to get the third equality we further consider cases where $\fst_i$ will be matched: (a) if $v_i$ is the top vertex so far, then it stays unmatched if either the $\snd_i=\fst_{i-1}$ is already matched, \emph{or} we don't match $v_i$ to $\fst_{i-1}$, (b) if $v_i$ is the second top vertex so far, then $\fst_i=\fst_{i-1}$ stays unmatched if it is unmatched at the step $i-1$ \emph{and} it is not matched to $v_i$ at step $i$, (c) if $v_i$ is ranked lower than $\fst_i,\snd_i$, then $\fst_i=\fst_{i-1}$ stays unmatched if it is unmatched at step $i-1$; to obtain the fourth equality, we use the fact that the probability $c_i$ of matching $\fst_i$ and $\snd_i$ (if it is possible) depends only on the time step $i$.

We next claim that the expected performance of the algorithm can be written as
\begin{equation}
	\label{eq:obj}
	\alg(\vec{c}) = \frac{1}{\binom{n}{2}} \sum_{i=2}^{n} (i-1) \cdot p_{i-1} \cdot c_i.
\end{equation}
To see this, note that
\begin{align}
\alg(\vec{c}) = & \sum_{i=2}^{n} \Prx{\{\fst_i,\snd_i \} = \{\fst_n,\snd_n\} \wedge v_i \in \{\fst_i, \snd_i\} \wedge O_{i-1} \wedge \matat{\fst_i \snd_i}{i} } \nonumber \\
= & \sum_{i=2}^n \Prx{\{\fst_i,\snd_i\} = \{\fst_n,\snd_n\}} \cdot \frac{2}{i} \cdot p_{i-1} \cdot c_i \nonumber \\
= & \sum_{i=2}^{n} \frac{\binom{n-2}{i-2}}{\binom{n}{i}} \cdot \frac{2}{i} \cdot p_{i-1} \cdot c_i \nonumber \\
= & \frac{1}{\binom{n}{2}} \sum_{i=2}^{n} (i-1) \cdot p_{i-1} \cdot c_i, \nonumber
\end{align}
where the first equality follows from the law of total probability; the second equality follows from the following facts:
1) conditioned on $\{\fst_i,\snd_{i}\} = \{\fst_n,\snd_n\}$, the arrival order of $v_1, \ldots, v_i$ is chosen uniformly at random,
2) the events $v_i \in \{\fst_i,\snd_i\}$ and $O_{i-1}$ are independent,
3) our algorithm matches $\fst_i\snd_{i}$ with probability $c_i$ when possible, i.e., when $v_i \in \{\fst_i, \snd_i\}$ and $\fst_{i-1}$ is unmatched.

For notation simplicity, let $q_i = i p_i$ for $i \in [n]$.
By equations~\eqref{eq:pi} and \eqref{eq:obj}, in order to find the optima$\vec{c} \in [0,1]^n$, it suffices to find the maximum of the following function\footnote{$f$ does not depend on $c_1$.} on $\vec{c} \in [0,1]^n$:
\begin{equation}
\label{eq:qi-recursive}
f(\vec{c},\vec{q}) \eqdef \sum_{i=2}^n q_{i-1} c_i
\quad\quad\text{where } q_i = 1+ \left(1-\frac{2c_i}{i-1}\right) \cdot q_{i-1}, \text{ and }q_1 = 1
\end{equation}
We calculate the derivative of $f(\vec{c},\vec{q}(\vec{c}))$ over $c_i$ (note that $\vec{q}$ also depends on $\vec{c}$).
\begin{align*}
\frac{df}{dc_i} = & q_{i-1} + \sum_{j=i+1}^{n} c_j \cdot \frac{dq_{j-1}}{d c_i} & \mbox{($q_j$ does not depend on $c_i$ for $j < i$)}\\
= & q_{i-1} + \sum_{j=i+1}^{n} c_j \cdot \frac{dq_{j-1}}{dq_{j-2}} \cdot \frac{dq_{j-2}}{dq_{j-3}} \cdots \frac{dq_{i}}{dc_i} & \mbox{(by the chain rule)}\\
= & q_{i-1} + \sum_{j=i+1}^{n} c_j \cdot \prod_{k=i+1}^{j-1} \left( 1-\frac{2c_k}{k-1} \right) \cdot \left( -\frac{2q_{i-1}}{i-1} \right) & \mbox{(by the recursive formula of $q_j$, see Equation~\eqref{eq:qi-recursive})}\\
= & q_{i-1} \cdot \left( 1 - \frac{2}{i-1} \cdot \sum_{j=i+1}^{n}c_j \prod_{k=i+1}^{j-1} \left( 1- \frac{2c_{k}}{k-1} \right) \right).
\end{align*}
Notice that $\frac{df}{dc_i}$ does not depend on $c_i$, i.e., f is a linear function of $c_i$. In particular, it means that the maximum of $f(\vec{c})$ on $\vec{c}\in[0,1]^n$ is achieved at either $c_i=0$, or $c_i=1$ depending on the sign of $\frac{df}{dc_i}$. Note that the maximum of $f$ must be attained by some $\vec{c}$, since $[0,1]^n$ is a compact space and $f$ is a continuous function. Moreover, we observe the following ``monotonicity'' property of the derivatives $\frac{df}{dc_i}$.
\begin{claim}
	\label{cl:derivative}
Let $i\ge 3$. If $\frac{df}{dc_i} \le 0$, then $\frac{df}{dc_{i-1}} < 0$.
\end{claim}
\begin{proof}
Let $x_{i-1} = \sum_{j=i}^{n}c_j \prod_{k=i}^{j-1} \left( 1- \frac{2c_{k}}{k-1} \right)$ and $x_i = \sum_{j=i+1}^{n} c_j\prod_{k=i+1}^{j-1} \left( 1- \frac{2c_{k}}{k-1} \right)$. Thus $x_{i-1} = x_i \cdot \left(1-\frac{2c_i}{i-1}\right) + c_i$.
Then, $\frac{df}{dc_i} \le 0$ iff $1-\frac{2x_i}{i-1}\le 0$ iff $x_i \ge \frac{i-1}{2}$. 

We get:
\begin{multline*}
x_{i-1} = x_i \cdot \left(1-\frac{2c_i}{i-1}\right) + c_i =
x_i + c_i\cdot\left(1-\frac{2x_i}{i-1} \right)
\ge
x_i + 1\cdot \left(1-\frac{2x_i}{i-1} \right)\\
=
x_i\cdot\frac{i-3}{i-1}+1
\ge\frac{i-1}{2}\cdot\frac{i-3}{i-1}+1
=\frac{i-1}{2}>\frac{i-2}{2}.
\end{multline*}
Consequently $\frac{df}{dc_{i-1}} < 0$.
\end{proof}

Let $\vec{c}\in[0,1]^n$ be the vector at which $f(\vec{c})$ attains its maximum. 
Let $\ell$ be the largest index $i\in[n]$ such that $\frac{df}{dc_{i}}\le 0$. Then, $\frac{df}{dc_{i}}> 0$ for all $i>\ell$, and 
by Claim~\ref{cl:derivative}, $\frac{df}{dc_{i}}< 0$ for all $i<\ell$. 
Since $f$ attains its maximum at $\vec{c}$, it must be that $c_i=1$ for all $i>\ell$ and $c_i=0$ for all $i<\ell$. 
We can also assume without loss of generality that $c_{\ell}=0$, as 
$\frac{df}{dc_{\ell}}\le 0$ (if $\frac{df}{dc_{\ell}}= 0$, then $f$ does not depend on $c_{\ell}$). 
We conclude that $c_i=0$ for all $i \leq \ell$ and $c_i=1$ for $i>\ell$.
In other words, the algorithm is deterministic: it matches no vertices up to step $\ell$, and thereafter matches the top two vertices so far whenever possible.


Next, we calculate the value of $q_i$ for all $i$.
\begin{itemize}
	\item For $i \le \ell$, $c_i=0$, and we get $q_i=1+q_{i-1}\cdot(1-\frac{2\cdot c_i}{i-1})=1+q_{i-1}$. Thus, $q_i=i$.
	\item For 
 $i > \ell$, $c_i=1$, and we get $q_{i} = 1+q_{i-1}\cdot(1-\frac{2\cdot c_i}{i-1})=1 + \frac{i-3}{i-1} \cdot q_{i-1}$.
 Multiplying the last equality by $(i-1)(i-2)$ gives
\begin{equation}
\label{eq:diff1}
	(i-1)(i-2)q_{i} - (i-2)(i-3) q_{i-1} = (i-1)(i-2). 
\end{equation}
Summing the LHS of \eqref{eq:diff1} over $j=\ell+1,\ldots,i$ (with $j$ in the role of $i$) gives $(i-1)(i-2)q_{i} - \ell (\ell-1)(\ell-2)$ due to telescopic sum.
Summing the RHS of \eqref{eq:diff1} over $j=\ell+1,\ldots,i$ (with $j$ in the role of $i$) gives $\sum_{j=\ell+1}^{i} (j-1)(j-2) = \frac{i(i-1)(i-2)}{3}-\frac{\ell(\ell-1)(\ell-2)}{3}$.
We get that
$q_i = \frac{i}{3} + \frac{2\ell (\ell-1)(\ell-2)}{3(i-1)(i-2)}$.
\end{itemize}
 

We are now ready to calculate the performance of the algorithm. Using Equation~\eqref{eq:obj}, the fact that $c_i=0$ for all $i\leq \ell$ and $c_i=1$ for all $i>\ell$, and the values of $q_i$ as calculated above gives
\begin{align*}
\binom{n}{2}\cdot \alg = & \sum_{i=2}^{n} q_{i-1}c_i = \sum_{i=\ell}^{n-1} q_i = \sum_{i=\ell}^{n-1} \left( \frac{i}{3} + \frac{2\ell(\ell-1)(\ell-2)}{3(i-1)(i-2)} \right) \\
=& \frac{ \sum_{i=1}^{n-1} i -\sum_{i=1}^{\ell-1} i}{3} +
 \frac{2\ell(\ell-1)(\ell-2)}{3} \sum_{i=\ell}^{n-1}\left[\frac{1}{i-2}-\frac{1}{i-1}\right] \\
= & \frac{n^2-n-\ell^2+\ell}{6} + \frac{2}{3}\ell(\ell-1)(\ell-2)\left(\frac{1}{\ell-2} - \frac{1}{n-2}\right) \\
= & \frac{n^2}{6}+ \frac{\ell^2}{2} - \frac{2\ell^3}{3n}+\left[\frac{\ell-n}{6}-\frac{2\ell}{3}+\frac{2(3\ell^2-2\ell)}{3n}-\frac{4\ell(\ell-1)(\ell-2)}{3n(n-2)} \right]\\
= & n^2 \cdot\left(\frac{1}{6}+\frac{\ell^2}{2n^2} - \frac{2\ell^3}{3n^3} + O\left(\frac{1}{n}\right) \right)
\le \frac{5}{24} n^2 +O(n),
\end{align*}
where the  last equality holds since $0<\ell\le n$; and the last inequality holds since the cubic function
$g(x)\eqdef \frac{1}{6}+\frac{x^2}{2} -\frac{2x^3}{3}$ with $x=\frac{\ell}{n}$ attains its maximum on the interval $x\in[0,1]$ at $x=0.5$, where the maximum value is $\frac{5}{24}$.
Therefore, $\alg \le \frac{5}{12} + O(\frac{1}{n})$, concluding the proof of Theorem~\ref{thm:ordinal-5-12}.
\end{proof}

\section{Secretary Matching with Edge Arrival}
\label{sec:edge_arrival}

In this section we present an algorithm that gives a competitive ratio of $1/4$ for edge arrival. 
Let $e_1, \ldots, e_m$ be the edges in their arrival order. 
Let $E_t=\{e_1, \ldots, e_t\}$ denote the set of edges that arrived up to time $t$, and let $\mu^*_t$ denote the (unique) maximum weighted matching in $G(E_t)$. 

The algorithm (see Algorithm~\ref{alg:edge-arrival}) ignores the first $\lfloor \frac{m}{2} \rfloor$ edges (exploration phase). 
Then, in every round $t$, upon the arrival of edge $e_t=uv$, it computes the probability that both $u$ and $v$ are available 
(the probability is taken with respect to the random arrival order of the edges $E_{t-1}$ and the random choices of the algorithm in steps $1$ to $t-1$); 
denote this probability by $x_t$.
It then finds the maximum weighted matching $\mu^*_t$ in the graph induced by $E_t$. 
If $e_t \in \mu^*_t$, the algorithm matches $e_t$ with probability $\frac{\alpha_t}{x_t}$, where $\alpha_t$ is given by the following formula~\eqref{eq:alphatdef}:
\begin{equation}
\alpha_t = 
\begin{cases}
0 & \text{if } t\leq \frac{m}{2}\\
1-2\sum_{i=1}^{t-1} \frac{\alpha_i}{i} & \text{if } t>\frac{m}{2}
\end{cases}
\label{eq:alphatdef}
\end{equation}

\begin{algorithm}
	\caption{$1/4$-competitive algorithm for matching secretary with edge arrival}
	\label{alg:edge-arrival}
	\begin{algorithmic}[1]
		\STATE Let $e_1,\ldots,e_m$ be the edges given in their arrival order
		\STATE $A=V$
		\quad\quad\quad\quad\quad\quad\quad\quad
		\tcp{$A$ is the set of available vertices}
		\STATE $\mu=\emptyset$
		\quad\quad\quad\quad\quad\quad\quad\quad
		\tcp{$\mu$ is the returned matching}
		\FOR{$t \in\{\lfloor\frac{m}{2}\rfloor+1,...,m\}$} 
		\STATE Let $e_t=uv$ be the edge arriving at time $t$
		\STATE Let $x_t$ be the probability that $e_t$ is available (i.e., $u,v \in A$) 
		\STATE Let $\mu^*_t$ be the maximum weighted matching in $G(E_t)$
		\IF{$e_t \in \mu^*_t$  and $u,v \in A$}
		\STATE With probability $\frac{\alpha_t}{x_t}$
		\STATE \ \ Add $e_t$ to $\mu$
		\STATE \ \ Remove $u$ and $v$ from $A$
		\ENDIF  
		\ENDFOR
		\STATE Return matching $\mu$
	\end{algorithmic}
\end{algorithm} 
The algorithm ensures that every given edge $e_t=uv$ is matched with a certain probability $\alpha_t$ given by \eqref{eq:alphatdef}, whenever $e_t$ is in the current maximum matching. This allows us to conveniently estimate the expected contribution of the maximum matching at time $t$ and compare it to the maximum matching in the whole graph. Before doing that, we need to prove that the algorithm is well defined, i.e., that $x_t \geq \alpha_t$ for every $t$.
Note that $x_t$ is the probability that $e_t$ is available, given a random order of edges in $E_t$ (the edges arriving up to time $t$) that ends with $e_t$. As such, it depends on $e_t$, and the set $E_t$, but {\em not} on the order of edges within $E_t$ (except for $e_t$ being the $t$th edge). 
Recall that ``$\matbefore{u}{t}$" denotes the event that $u$ becomes unavailable before time $t$, and ``$\matat{e}{t}$" denotes the event that edge $e$ is matched exactly at time $t$. 
\begin{lemma}
	\label{lem:well-defined}
For every time $t$, vertices $u,v$ and set of edges $Q$ of size $t-1$, given that $e_t=uv$ and $E_{t-1}=Q$, it holds that $x_t \geq \alpha_t $.
\end{lemma}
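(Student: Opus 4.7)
The plan is to apply a union bound on the events ``$u$ becomes unavailable before $t$'' and ``$v$ becomes unavailable before $t$''. Since $x_t = 1 - \Prx{\matbefore{u}{t} \text{ or } \matbefore{v}{t} \mid E_{t-1}=Q,\, e_t=uv}$, it suffices to prove
\[
\Prx{\matbefore{u}{t} \mid E_{t-1}=Q,\, e_t=uv} \;\leq\; \sum_{i=1}^{t-1} \frac{\alpha_i}{i},
\]
together with its symmetric counterpart for $v$; the recursive definition~\eqref{eq:alphatdef} of $\alpha_t$ then immediately yields $x_t \geq 1 - 2\sum_{i=1}^{t-1}\alpha_i/i = \alpha_t$.

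To establish the displayed inequality I would first decompose $\Prx{\matbefore{u}{t}} = \sum_{i=1}^{t-1} \Prx{\matat{u}{i}}$ and then carry out a two-step calculation. The first step observes that $x_i$, as defined inside the algorithm, is exactly the conditional probability that both endpoints of $e_i$ are available given the pair $(E_i, e_i)$; combined with the matching rule ``select $e_i$ with probability $\alpha_i/x_i$ whenever available and $e_i\in\mu^*_i$'', this yields
\[
\Prx{\matat{u}{i} \mid E_i,\,e_i,\,E_{t-1}=Q,\,e_t=uv} \;=\; \alpha_i\cdot \indic[e_i\ni u \text{ and } e_i\in\mu^*_i].
\]
Taking expectation over $(E_i,e_i)$ conditional on $E_{t-1}=Q,\,e_t=uv$, it remains to bound $\Prx{e_i\ni u,\, e_i\in\mu^*_i \mid E_{t-1}=Q,\, e_t=uv}$ by $1/i$. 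This is the second step, and it is purely combinatorial: conditional on $E_{t-1}=Q,\,e_t=uv$ the edges of $Q$ arrive in a uniformly random order, so $E_i$ is a uniform $i$-subset of $Q$ and, given $E_i=S$, the edge $e_i$ is uniform on $S$. For any such $S$ the matching $\mu^*_S$ contains at most one edge incident to $u$, so the probability that $e_i$ is exactly that edge is at most $1/i$; averaging over $S$ preserves the bound. Summing $\Prx{\matat{u}{i}} \leq \alpha_i/i$ over $i\in\{1,\ldots,t-1\}$ completes the reduction.

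The main subtlety I anticipate is justifying the first step, where the algorithm's match probability given $(E_i,e_i)$ is equated with $\alpha_i\cdot \indic[e_i\in\mu^*_i]$: this implicitly uses $\alpha_i\leq x_i$ at every earlier time $i<t$, so that the rule ``match with probability $\alpha_i/x_i$'' is actually well defined. I would therefore frame the full argument as an induction on $t$, where the inductive hypothesis guarantees the validity of the matching probabilities at all previous times and hence of the conditional-match computation used in the second paragraph. A minor additional check is that the conditional availability probability of $e_i$ does not depend on the future conditioning $E_{t-1}=Q,\,e_t=uv$ (only on the past, through $(E_i,e_i)$), which holds because the algorithm's decisions up to time $i$ are functions only of the first $i$ arrivals.
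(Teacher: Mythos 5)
Your proposal is correct and follows essentially the same route as the paper: induction on $t$, decomposition of $\Prx{\matbefore{u}{t}}$ over the time the vertex becomes unavailable, the observation that the algorithm matches $e_i$ with unconditional probability $\alpha_i$ whenever $e_i\in\mu^*_i$, the purely combinatorial bound $\Prx{e_i\ni u,\ e_i\in\mu^*_i}\le 1/i$ (the paper phrases this as $\sum_{e\ni u,\,e\in Q}\Prx{e\in\mu^*_i\mid \aat{e}{i}}\le (t-1)/i$, which is the same bound after factoring out $\Prx{\aat{e}{i}}=1/(t-1)$), and the final union bound over $u$ and $v$. Your choice to condition on the pair $(E_i,e_i)$ before cancelling $x_i$ against $\alpha_i/x_i$ is a slightly cleaner bookkeeping of the same cancellation the paper performs, and your remark that this step silently relies on $\alpha_i\le x_i$ for $i<t$, forcing the induction, matches the paper's structure.
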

\begin{proof}

We prove by induction on $t$. For the base case, where $t=\lfloor \frac{m}{2} \rfloor + 1$, the statement holds trivially since we have not matched anything yet, i.e., $x_t=1$.
Next, fix $t$ and suppose the statement holds for all $t' \le t-1$. Let $e_t=uv$ be the edge arriving at time $t$ and the let $Q=E_{t-1}$ be the set of arrived edges before time $t$.

For simplicity, in the remainder of the proof, we omit the given $e_t=uv$ and $E_{t-1} = Q$ in all probabilities and indicator expressions.

It holds that
$$
\Prx{\matbefore{u}{t}}   =   \sum_{\substack{e\ni u, \\ e\in Q} }  \sum_{i=1}^{t-1} \Prx{\aat{e}{i}} \cdot \Prx{\matat{e}{i}  \mid \aat{e}{i}},
$$
Clearly, $\Prx{\aat{e}{i}}=\frac{1}{t-1}$ due to the random arrival order. To calculate $ \Prx{\matat{e}{i}  \mid \aat{e}{i}}$, note that
the edge $e_i=uv'$ is matched at time $i$ if (i) $e_i$ belongs to $\mu^*_i$, and (ii) both $u$ and $v'$ are available. Under (i) and (ii), $e_i$ is matched with probability $\alpha_i/x_i$ by the induction hypothesis. 
Since (i) and (ii) are independent and (ii) occurs with probability $x_i$, we get:
 \begin{eqnarray}
 \Prx{\matbefore{u}{t}} & = & \sum_{\substack{e\ni u, \\ e\in Q} }  \sum_{i=1}^{t-1} \frac{1}{t-1} \cdot \Prx{e \in \mu^*_i \mid \aat{e}{i}} \cdot x_i \cdot \frac{\alpha_i}{x_i} \nonumber\\
  & = & \frac{1}{t-1}  \sum_{i=1}^{t-1} \alpha_i \sum_{\substack{e\ni u, \\ e\in Q}  }    \Prx{e \in \mu^*_i \mid \aat{e}{i}}, 
 \label{eq:u-is-unmatched}
\end{eqnarray}
where the last equality is obtained by changing the order of summation.
We next prove that 
\begin{equation}
\label{eq:edge-ub}
	\sum_{\substack{e\ni u, \\ e\in Q} }\Prx{e \in \mu^*_i \mid \aat{e}{i}} \leq \frac{t-1}{i}.
\end{equation}
It holds that
$$
\sum_{\substack{e\ni u, \\ e\in Q}  } \Prx{e \in \mu^*_i \mid \aat{e}{i}} = \sum_{\substack{e\ni u, \\ e\in Q} }   
\sum_{\substack{S \subseteq Q,\\ e\in S, |S|=i}} 
\indic[e \in \mu^*_i \mid \aat{e}{i}, \sfe{S}{i}] \cdot \Prx{\sfe{S}{i} \mid \aat{e}{i}}. 
$$
Since the arrival order is chosen uniformly at random, $\forall S\subset Q: |S|=i$ we have $\Prx{\sfe{S}{i} \mid \aat{e}{i}} = \frac{1}{{{t-2}\choose{i-1}}}$, which can be written as $\frac{1}{{{t-1}\choose{i}}} \cdot \frac{t-1}{i}$. By changing the order of summation we get
\begin{multline*} 
\sum_{\substack{e\ni u, \\ e\in Q} } \Prx{e \in \mu^*_i \mid \aat{e}{i}}  = 
 \sum_{\substack{S \subseteq Q,\\ |S|=i}}   \sum_{\substack{e\ni u,\\ e \in S}} \indic[e \in \mu^*_i \mid  \sfe{S}{i}] \cdot \frac{1}{{{t-1}\choose{i}}} \cdot \frac{t-1}{i} \\
 =   \frac{t-1}{i} \sum_{\substack{S \subseteq Q,\\ |S|=i}}   \sum_{\substack{e\ni u,\\ e \in S}} \indic[e \in \mu^*_i \mid \sfe{S}{i}] \cdot \Prx{\sfe{S}{i}} 
 \leq   \frac{t-1}{i} \sum_{\substack{S \subseteq Q,\\ |S|=i}}    \Prx{\sfe{S}{i}} =    \frac{t-1}{i}. 
\end{multline*}
The second equality holds since $\Prx{\sfe{S}{i}} = \frac{1}{{{t-1}\choose{i}}}$, and the inequality follows by observing that $\sum_{ e\ni u, e \in S} \indic[e \in \mu^*_i \mid \sfe{S}{i}] \leq 1$ since the events $e \in \mu^*_i$ are disjoint for different $e\ni u$. The last equality follows since $\sum_{S \subseteq Q, |S|=i}\Prx{\sfe{S}{i}}=1$, as a partition into all possible realizations of $E_i$. This concludes the proof of Equation~\eqref{eq:edge-ub}.
We combine \eqref{eq:u-is-unmatched} and \eqref{eq:edge-ub} and get the following probability bound that $u$ (and similarly $v$) is matched before time $t$:
\[
\Prx{\matbefore{u}{t}} \leq \sum_{i=1}^{t-1} \frac{\alpha_i}{i}\quad, \quad
\Prx{\matbefore{v}{t}} \leq \sum_{i=1}^{t-1} \frac{\alpha_i}{i}
\]

Applying the union bound to the events $\matbefore{u}{t}$ and $\matbefore{v}{t}$ the probability that both $u$ and $v$ are available upon the arrival of $e_t$ is
\begin{equation*}
x_t = \Prx{\avail{u,v}{t}}  \geq 1- 2 \sum_{i=1}^{t-1} \frac{\alpha_i}{i} 
\stackrel{\eqref{eq:alphatdef}}{=}
\alpha_t.
\end{equation*}
This concludes the proof of Lemma~\ref{lem:well-defined}.
\end{proof}
We are now ready to prove that Algorithm~\ref{alg:edge-arrival} is $\frac{1}{4}$-competitive.
\begin{theorem}
\label{thm:edge_arrival}
	Algorithm~\ref{alg:edge-arrival} has a competitive ratio of $\frac{1}{4}$.
\end{theorem}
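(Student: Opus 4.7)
The strategy is a three-stage decoupling: (i) invoke Lemma~\ref{lem:well-defined} to convert the acceptance probability $\alpha_t/x_t$ into a uniform quantity, showing that conditional on $e_t \in \mu^*_t$ the algorithm matches $e_t$ with probability exactly $\alpha_t$; (ii) apply a classical secretary-style sampling bound to relate $\mathbb{E}[w(\mu^*_t)]$ to the offline optimum $w(\mu^*)$; and (iii) solve the recurrence~\eqref{eq:alphatdef} in closed form and sum.

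\textbf{Step 1 --- the accounting.} I would condition on the unordered set $E_t$ of the first $t$ arriving edges. By the uniform random arrival, the identity of the last edge $e_t$ is uniform over $E_t$. For any candidate $e \in E_t$, the probability that both endpoints of $e$ are still available upon arrival equals $x_t$ by definition of $x_t$, and conditional on this event the algorithm matches $e$ with probability $\alpha_t/x_t$, so the two factors collapse to $\alpha_t$ and the dependence on $x_t$ cancels out. Averaging $w(e)\cdot\mathbf{1}[e \in \mu^*_t]$ uniformly over $e \in E_t$ yields
\[
\mathbb{E}\bigl[\text{weight added at time } t \,\big|\, E_t\bigr] \;=\; \alpha_t \cdot \frac{w(\mu^*_t)}{t}.
\]

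\textbf{Steps 2 and 3 --- sampling plus closed form.} Because $E_t$ is a uniformly random $t$-subset of $E$, each edge of $\mu^*$ lies in $E_t$ with probability $t/m$, and $\mu^* \cap E_t$ is automatically a matching in $G(E_t)$. Hence $\mathbb{E}[w(\mu^*_t)] \ge (t/m)\cdot w(\mu^*)$, and summing Step 1 over $t > \lfloor m/2 \rfloor$ gives $\mathbb{E}[w(\mu)] \ge (w(\mu^*)/m) \sum_{t > m/2} \alpha_t$. To evaluate the sum, subtract consecutive instances of~\eqref{eq:alphatdef} to obtain, for $t \ge k+2$ with $k=\lfloor m/2\rfloor$, the multiplicative recursion $\alpha_t = \alpha_{t-1}\bigl(1 - \tfrac{2}{t-1}\bigr)$ seeded by $\alpha_{k+1} = 1$. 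Unrolling telescopes to the closed form $\alpha_{k+j} = k(k-1)/[(k+j-1)(k+j-2)]$ for $j \ge 1$, and the partial-fraction identity $\tfrac{1}{(k+j-2)(k+j-1)} = \tfrac{1}{k+j-2} - \tfrac{1}{k+j-1}$ telescopes the sum to $k - k(k-1)/(m-1)$. For the representative case $m = 2k$ this equals $k^2/(2k-1)$, so the competitive ratio is at least $k/[2(2k-1)] \ge 1/4$, with the same asymptotic for odd $m$.

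\textbf{Main obstacle.} The conceptual crux is the cancellation in Step 1: one must condition on the \emph{set} $E_t$ rather than on the full random history of algorithmic choices, so that $x_t$ is a single well-defined scalar, and one must invoke Lemma~\ref{lem:well-defined} to certify that $\alpha_t/x_t \le 1$ really is a bona fide acceptance probability. Once this piece is in place, Step 2 is the standard ``optimum appears proportionally in a random prefix'' argument, and Step 3 is a careful but routine double telescoping.
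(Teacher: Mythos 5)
Your proposal is correct and follows essentially the same route as the paper's own proof: the same collapse $x_t \cdot (\alpha_t/x_t) = \alpha_t$ conditioned on $E_t$ with $e_t$ uniform over it, the same sampling bound $\Ex{w(\mu^*_t)} \ge (t/m)\,w(\mu^*)$, the same closed form $\alpha_t = k(k-1)/[(t-1)(t-2)]$ with $k = \lfloor m/2\rfloor$, and the same telescoping of $1/[(t-1)(t-2)]$. The only cosmetic difference is that the paper writes the Step-1 accounting as an explicit triple sum over $(t, S, e)$ while you phrase it as conditioning on the unordered set $E_t$; the arithmetic is identical.
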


\begin{proof}
First, as the set of the first $t$ edges $E_t$ is chosen uniformly at random, the maximum matching $\mu^*_t$ in $E_t$ 
has the expected total weight greater than or equal to the expected weight of the $E_t$ edges in the optimal matching $\mu^*$
\begin{equation}
\Ex{\sum_{e\in\mu^*_t}w_e} \geq \frac{t}{m}\cdot \sum_{e\in\mu^*}w_e 
\label{eq:opti}.
\end{equation}

Next, we prove by induction that for every $t>\frac{m}{2}$:
\begin{equation}
\alpha_t=\frac{\lfloor \frac{m}{2} \rfloor \cdot \lfloor \frac{m-2}{2} \rfloor  }{(t-1)(t-2)}. 
\label{eq:alphat}
\end{equation}
For $t = \lfloor \frac{m}{2} \rfloor+1$, $\alpha_t$ is indeed $1$.
For $t>\lfloor \frac{m}{2} \rfloor+1$, $$\alpha_t = 1- 2 \sum_{i=1}^{t-1} \frac{\alpha_i}{i}= \alpha_{t-1} - 2 \cdot \frac{\alpha_{t-1}}{t-1} = \frac{t-3}{t-1} \cdot \alpha_{t-1}.$$ 
The induction hypothesis now implies that $\frac{t-3}{t-1} \cdot \alpha_{t-1} = \frac{t-3}{t-1} \cdot \frac{\lfloor \frac{m}{2} \rfloor \cdot\lfloor \frac{m-2}{2} \rfloor  }{(t-2)(t-3)} = \frac{\lfloor \frac{m}{2} \rfloor \cdot\lfloor \frac{m-2}{2} \rfloor  }{(t-1)(t-2)}$, concluding~\eqref{eq:alphat}.

We are now ready to establish the competitive ratio of Algorithm~\ref{alg:edge-arrival}. Write $\mu$ for the matching returned by Algorithm~\ref{alg:edge-arrival}.
Edge $e_t=uv$ is matched in round $t$ if: (i) it belongs to $\mu^*_t$, and (ii) both $u$ and $v$ are available (this happens with probability $x_t$). Under these two events, $e_t$ is matched with probability $\alpha_t/x_t$. Since events (i) and (ii) are independent, we get that
$$
\Ex{\sum_{e\in\mu} w_e} = \sum_{t=\lfloor\frac{m}{2}\rfloor+1}^{m} \sum_{\substack{S \subseteq E,\\ |S|=t}} \sum_{e\in S} \Ex{w_e \cdot \indic[e \in \mu^*_t ] \mid  e=e_t, E_t=S } \cdot \Prx{E_t =S, e=e_t} \cdot \alpha_t.
$$
Observe that $\sum_{e\in S} \Ex{w_e \cdot \indic[e \in \mu^*_t ] \mid  e=e_t, E_t=S} = \Ex{\sum_{e\in\mu^*_t} w_e  \mid   E_t=S }$ and also that $\Prx{E_t =S, e=e_t} = \Prx{E_t=S}\cdot \frac{1}{t}$. We get

\begin{eqnarray}
\Ex{\sum_{e\in\mu} w_e}
	& = &  \sum_{t=\lfloor\frac{m}{2}\rfloor+1}^{m} \sum_{\substack{S \subseteq E,\\ |S|=t}} \Ex{\sum_{e\in\mu^*_t}w_e  ~\Big\vert~   E_t=S } \cdot \Prx{E_t=S}\cdot\frac{1}{t} \cdot \alpha_t \nonumber \\	
& = &  
	\sum_{t=\lfloor\frac{m}{2}\rfloor+1}^{m}  \Ex{\sum_{e\in\mu^*_t}w_e }  \cdot\frac{\alpha_t}{t}
	\stackrel{\eqref{eq:alphat}}{=}
	\sum_{t=\lfloor\frac{m}{2}\rfloor+1}^{m}  \Ex{\sum_{e\in\mu^*_t}w_e }  \cdot\frac{1}{t}  \cdot  \frac{\lfloor \frac{m}{2} \rfloor \cdot\lfloor \frac{m-2}{2} \rfloor  }{(t-1)(t-2)}
	\nonumber \\
	& \stackrel{\eqref{eq:opti}}{\geq} &  \sum_{t=\lfloor\frac{m}{2}\rfloor+1}^{m}  \frac{t\cdot \sum_{e\in\mu^*}w_e}{m} \cdot\frac{1}{t}  \cdot \frac{\lfloor \frac{m}{2} \rfloor \cdot \lfloor \frac{m-2}{2} \rfloor  }{(t-1)(t-2)} \nonumber \\
	& = &  \left\lfloor \frac{m}{2} \right\rfloor  \cdot \left\lfloor \frac{m-2}{2} \right\rfloor \cdot \frac{\sum_{e\in\mu^*}w_e  }{m} \cdot \sum_{t=\lfloor\frac{m}{2}\rfloor+1}^{m}   \frac{1}{(t-1)(t-2)} 
		\ge \frac{1}{4} \sum_{e\in\mu^*}w_e 
\label{eq:teleskopic_sum}
\end{eqnarray}
The last inequality follows by observing that $\frac{1}{(t-1)(t-2)} = \frac{1}{t-2} - \frac{1}{t-1}$, thus the sum telescopes to 
$\frac{1}{\lfloor\frac{m}{2}\rfloor-1}-\frac{1}{m-1}$; we have $\left\lfloor \frac{m}{2} \right\rfloor  \cdot \left\lfloor \frac{m-2}{2}\right\rfloor \cdot\frac{1}{m}\left(\frac{1}{\lfloor\frac{m}{2}\rfloor-1}-\frac{1}{m-1}\right)=\frac{\lfloor \frac{m}{2}\rfloor}{m}
\left(1-\frac{\lfloor \frac{m-2}{2}\rfloor}{m-1}\right)>\frac{1}{2}\left(1-\frac{1}{2}\right)=1/4$. I.e., the coefficient at $\sum_{e\in\mu^*}w_e$ is  at least $\frac{1}{4}$. This concludes the proof of the theorem.
\end{proof}

\noindent {\bf Remark:}
It is quite straightforward to implement  Algorithm~\ref{alg:edge-arrival} in {\em exponential} (in $m$) time using Monte Carlo simulations: one simply needs to compute all $x_t=x_t(S)$ depending on the set of visible edges $S\subset E$ (i.e., edges with known weights). We can easily compute all $\alpha_t$ using the simple explicit formula \eqref{eq:alphat}. Unfortunately, we do not know how to efficiently compute or estimate $x_t$ in subexponential time (to estimate the probability that both ends of $e_t$ are available, we need to predict at each step $i<t$ what  Algorithm~\ref{alg:edge-arrival} would do for a random set of edges $S\subset E_t$). 
Thus, our result in Theorem~\ref{thm:edge_arrival} can be seen as an information-theoretic result. It remains an interesting open problem whether there is a poly-time online algorithm that matches this bound of $1/4$.

	\bibliographystyle{alpha}
	\bibliography{prophet-matching}
	
	\newpage
	\appendix

\section{Generalization to Hypergraphs}
\label{sec:hypergraph}

In this section we generalize Algorithm~\ref{alg:edge-arrival} to the online bipartite hypergraph secretary matching problem studied by \citet{KorulaP09} and \citet{KesselheimRTV13}. 

Let $H= (L \cup R, E)$ be the underlying edge-weighted $(d+1)$-hypergraph. Each edge in $E$ has the form $(v,S)$ where $v \in L, S \subseteq R$ and $|S| \le d$. All vertices in $R$ are given in advance and the vertices in $L$ arrive online uniformly at random. We assume $|L|=m$, and $m$ is known in advance. Upon the arrival of vertex $v$, all its incident hyperedges are revealed with the corresponding weights.
The edge arrival model studied in Section~\ref{sec:edge_arrival} can be viewed as a special case of the online bipartite $3$-hypergraph matching. Specifically, for the underlying graph $G=(V,E_G)$ of the edge arrival problem we construct a hypergraph in which $R=V$ and each vertex in $L$ corresponds to an edge in $E_G$. That is, each vertex $\ell\in L$ corresponding to the edge $e=(uv)\in E_G$ has only one incident hyperedge $\{\ell,u,v\}$ in the hypergraph. 

Let $L_t=\{\ell_1, \ldots, \ell_t\}$ denote the set of vertices that arrived up to time $t$, and let $\mu^*_t$ denote the (unique) maximum weighted matching in $H(L_t \cup R)$. 
The algorithm (see Algorithm~\ref{alg:hypergraph}) ignores the first $\lfloor f_d \cdot m \rfloor$ vertices (exploration phase), where $f_d = 1 / d^{\frac{1}{d-1}}$. 
Then, in every round $t$, upon the arrival of vertex $\ell_t$, we first find the maximum weighted matching $\mu^*_t$ in the graph induced by $L_t \cup R$. Let $e_t$ be the incident edge of $\ell_t$ in $\mu^*_t$. If $\ell_t$ is not matched in $\mu^*_t$, let $e_t$ be a null edge for notation simplicity. We compute the probability that edge $e_t$ is available (the probability is taken with respect to the random arrival order of the vertices $L_{t-1}$ and random choices of the algorithm in steps $1$ to $t-1$); 
denote this probability by $x_t$.
Then, the algorithm matches $e_t$ with probability $\frac{\alpha_t}{x_t}$, where $\alpha_t$ is given by the following formula~\eqref{eq:hyper_alphatdef}:
\begin{equation}
\alpha_t = 
\begin{cases}
0 & \text{if } t\leq f_d \cdot m\\
1- d \cdot \sum_{i=1}^{t-1} \frac{\alpha_i}{i} & \text{if } t > f_d \cdot m
\end{cases}
\label{eq:hyper_alphatdef}
\end{equation}

\begin{algorithm}
	\caption{Algorithm for online bipartite hypergraph matching secretary}
	\label{alg:hypergraph}
	\begin{algorithmic}[1]
		\STATE Let $\ell_1,\ldots,\ell_m$ be the vertices of the left side $L$ given in their arrival order
		\STATE $A=R$
		\quad\quad\quad\quad\quad\quad\quad\quad
		\tcp{$A$ is the set of available vertices}
		\STATE $\mu=\emptyset$
		\quad\quad\quad\quad\quad\quad\quad\quad
		\tcp{$\mu$ is the returned matching}
		\FOR{$t \in\{ \lfloor f_d \cdot m \rfloor +1,...,m\}$} 
		\STATE Let $\mu^*_t$ be the maximum weighted matching in $H(L_t \cup R)$
		\STATE Let $e_t=(\ell_t, S_t)$ be the incident edge of $\ell_t$ in $\mu^*_t$
		\quad \tcp{If $\ell_t$ is not matched in $\mu^*_t$, let $S_t = \emptyset$}
		\STATE Let $x_t$ be the probability that $e_t$ is available (i.e., $S_t \subseteq A$) 

		\IF{$S_t \subseteq A$}
		\STATE With probability $\frac{\alpha_t}{x_t}$
		\STATE \ \ Add $e_t$ to $\mu$
		\STATE \ \ Remove $S_t$ from $A$
		\ENDIF  
		\ENDFOR
		\STATE Return matching $\mu$
	\end{algorithmic}
\end{algorithm} 

Observe that the edge $e_t$ we consider to take at time $t$ only depends on the set of arrived vertices $L_t$ at time $t$ and the last arriving vertex $\ell_t$. 
The algorithm ensures that every edge $e_t$ is matched with a certain probability $\alpha_t$ given by \eqref{eq:hyper_alphatdef}. This allows us to conveniently estimate the expected contribution of the maximum matching at time $t$ and compare it to the maximum matching in the whole graph. Before doing that, we need to prove that the algorithm is well defined, i.e., that $x_t \geq \alpha_t$ for every $t$.

Note that $x_t$ is the probability that $e_t$ is available, given a random order of vertices in $L_t$ (the vertices arriving up to time $t$) that ends with $\ell_t$. As such, the probability depends on $\ell_t$, and the set $L_t$, but {\em not} on the order of vertices within $L_t$ (except for $\ell_t$ being the $t$-th vertex). 

Recall that ``$\matbefore{u}{t}$" denotes the event that $u\in R$ becomes unavailable before time $t$, and ``$\matat{e}{t}$" denotes the event that edge $e$ is matched exactly at time $t$. 

\begin{lemma}
	\label{lem:hyper_well-defined}
	For every time $t$, vertex $v\in L$, and a set of vertices $Q$, if $\ell_t=v$ and $L_{t-1}=Q$, then $x_t \geq \alpha_t $.
\end{lemma}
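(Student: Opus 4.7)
The plan is to proceed by induction on $t$, closely mirroring the proof of Lemma~\ref{lem:well-defined} but adapted to the hypergraph setting. The base case $t = \lfloor f_d \cdot m \rfloor + 1$ is immediate: no matches have yet been made, so $x_t = 1$, while $\alpha_t = 1$ by (\ref{eq:hyper_alphatdef}) since all earlier $\alpha_i$ vanish. For the inductive step, fix $\ell_t = v$ and $L_{t-1} = Q$, which determines the edge $e_t = (v, S_t)$ under consideration at time $t$. The goal is to bound $\Prx{\matbefore{u}{t} \mid \ell_t = v, L_{t-1} = Q}$ for each $u \in S_t$, and then conclude by a union bound over $S_t$.

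The key computation decomposes the matching probability of $u$ across time steps $i < t$. At step $i$, the vertex $u$ can become matched only if (a) $\ell_i$ happens to be the $L$-endpoint of the (unique) edge of $\mu^*_i$ whose $R$-side contains $u$, and (b) the algorithm then commits to that edge. Conditioning on $L_i = S' \subseteq Q$, the identity of $\ell_i$ is uniform over $S'$, so event (a) occurs with probability $\indic[u \text{ is matched in } \mu^*_i(S')]/i \leq 1/i$, using that $u$ belongs to at most one edge of the matching $\mu^*_i(S')$. Given (a), together with the implied restriction $L_{i-1} = S' \setminus \{\ell_i\}$, the inductive hypothesis gives $x_i \geq \alpha_i$, so the algorithm commits to the edge with probability $x_i \cdot (\alpha_i / x_i) = \alpha_i$. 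Combining these two factors and averaging over admissible $S'$ yields $\Prx{u \text{ is matched at time } i \mid \ell_t=v, L_{t-1}=Q} \leq \alpha_i / i$, and summing over $i$ gives $\Prx{\matbefore{u}{t} \mid \ell_t=v, L_{t-1}=Q} \leq \sum_{i=1}^{t-1} \alpha_i / i$.

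Finally, applying a union bound over the at most $d$ vertices $u \in S_t$ yields
\begin{equation*}
x_t \;=\; \Prx{S_t \subseteq A \mid \ell_t = v, L_{t-1} = Q} \;\geq\; 1 - d \sum_{i=1}^{t-1} \frac{\alpha_i}{i} \;=\; \alpha_t,
\end{equation*}
where the last equality is exactly the recursion (\ref{eq:hyper_alphatdef}). The main subtlety is the careful nested conditioning: the arrival order within $L_{t-1} = Q$ is uniform and independent of the identity of $\ell_t$, so the inductive hypothesis can be applied at each intermediate configuration $(\ell_i, L_{i-1})$ consistent with the outer conditioning without distortion. Beyond this bookkeeping, no substantive technical obstacle arises, since the hypergraph proof departs from the edge-arrival argument only in that (i) at each step a single edge $e_i$ is considered (rather than summing over all incident edges to $u$ that have arrived), and (ii) the final union bound is over at most $d$ vertices rather than $2$.
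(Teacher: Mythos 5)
Your proof is correct and follows essentially the same route as the paper's: induction on $t$, decomposing $\Prx{\matbefore{u}{t}}$ across steps $i<t$, using the uniform arrival order to get the $1/i$ factor, the fact that $u$ lies in at most one edge of $\mu^*_i$, the inductive hypothesis to make $\alpha_i/x_i$ a valid probability so each term contributes $\alpha_i$, and finally a union bound over the at most $d$ vertices of $S_t$. The paper merely makes the conditioning on $(L_i,\ell_i)$ and the bound $\sum_{z_i\in Q_i}\indic[u\in e_i\mid \ell_i=z_i, L_i=Q_i]\le 1$ more explicit via a triple sum; your argument packages the same content more compactly.
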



\begin{proof}
	The proof proceeds by induction on $t$. For the base case, where $t=\lfloor f_d \cdot m \rfloor + 1$, the statement holds trivially since we have not matched anything yet, i.e., $x_t=1$.
	Next, fix $t$ and suppose the statement holds for all $t' \le t-1$. Let $v=\ell_t$ be the vertex arriving at time $t$ and let $Q = L_{t-1}$ be the set of arrived vertices up to time $t$. Observe that given $v=\ell_t, Q=L_{t-1}$, the edge $e_t = (v, S)$ is fixed.
	For simplicity, we omit conditioning on $\ell_t=v, L_{t-1} = Q$ in all probabilities and indicator expressions in the remainder of the lemma's proof. 
	For each vertex $u \in R$, it holds that
	\begin{multline}
	\Prx{\matbefore{u}{t}}  =  \sum_{i=1}^{t-1} \sum_{\substack{Q_i \subseteq Q\\ |Q_i| = i}} \sum_{z_i \in Q_i} \Prx{\ell_i = z_i, L_i = Q_i} \cdot \indic[u \in e_i \mid \ell_i = z_i, L_i=Q_i] \\
	\cdot \Prx{\matat{e_i}{i}  \mid \ell_i = z_i, L_i=Q_i}.
	\end{multline}
	Notice that $\Prx{\matat{e_i}{i}  \mid \ell_i=z_i, L_i=Q_i} = \alpha_i$, according to the design of our algorithm and the induction hypothesis, and that $\Prx{\ell_i=z_i, L_i=Q_i} = \frac{1}{i} \Prx{L_i=Q_i}$ due to the random arrival order of the vertices. We get
	\begin{eqnarray}
		\Prx{\matbefore{u}{t}} & = &  \sum_{i=1}^{t-1} \frac{\alpha_i}{i} \cdot \sum_{\substack{Q_i \subseteq Q\\ |Q_i| = i}} \Prx{L_i = Q_i} \sum_{z_i \in Q_i} \indic[u \in e_i \mid \ell_i = z_i , L_i=Q_i]. \nonumber 
		\label{eq:hyper_u-is-unmatched}
	\end{eqnarray}
The maximum matching $\mu^*_i$ is fixed for a given $Q_i$ and $u \in e_i$ if and only if (i) $u$ is matched in $\mu^*_i$ and (ii) its corresponding online vertex arrives at time $i$. That is, $\sum_{z_i \in Q_i } \indic[u \in e_i \mid \ell_i = z_i , L_i=Q_i] \le 1$.
	Consequently, we have that
	\begin{eqnarray}
	\Prx{\matbefore{u}{t}} & \le &  \sum_{i=1}^{t-1} \frac{\alpha_i}{i} \cdot \sum_{\substack{Q_i \subseteq Q\\ |Q_i| = i}} \Prx{L_i = Q_i} \cdot 1 = \sum_{i=1}^{t-1} \frac{\alpha_i}{i}.
	\end{eqnarray}
	Finally, since $S_t$ contains at most $d$ vertices, applying the union bound to the events $\matbefore{u}{t}$ for all $u \in S_t$, we have that the probability that $e_t$ is available is
	\begin{equation*}
		x_t = \Prx{\avail{e_t}{t}}  \geq 1- d \sum_{i=1}^{t-1} \frac{\alpha_i}{i} 
		\stackrel{\eqref{eq:hyper_alphatdef}}{=}
		\alpha_t.
	\end{equation*}
	This concludes the proof of Lemma~\ref{lem:hyper_well-defined}.
\end{proof}

We are now ready to conclude the competitive analysis of Algorithm~\ref{alg:hypergraph}. Our competitive ratio has the same asymptotic order $\Omega(\frac{1}{d})$ as the previous best bound of $\frac{1}{ed}$ by \citet{KesselheimRTV13}, but the constant factor of $e$ improves when $d$ goes to infinity.  We assume without loss of generality that the number of vertices $m$ in $L$ is large. Indeed, we can slightly modify Algorithm~\ref{alg:hypergraph} by adding a number of dummy vertices with no edges to $R$. 

\begin{theorem}
	\label{thm:hypergraph}
	Algorithm~\ref{alg:hypergraph} has a competitive ratio of $1/d^{\frac{d}{d-1}}$.
\end{theorem}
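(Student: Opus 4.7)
The plan is to follow the blueprint of the edge-arrival proof (Theorem~\ref{thm:edge_arrival}). Lemma~\ref{lem:hyper_well-defined} already shows that the algorithm is well defined, so at every step $t > t_0 \eqdef \lfloor f_d m\rfloor$ the probability that edge $e_t$ (the edge incident to $\ell_t$ in $\mu^*_t$) is actually added to $\mu$ is exactly $\alpha_t$. The first step is to solve the recursion \eqref{eq:hyper_alphatdef}: subtracting consecutive terms yields
\[
\alpha_t \;=\; \alpha_{t-1}\!\left(1-\tfrac{d}{t-1}\right) \;=\; \alpha_{t-1}\cdot\tfrac{t-1-d}{t-1} \qquad (t>t_0+1),
\]
with $\alpha_{t_0+1}=1$. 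Telescoping gives the closed form $\alpha_t = \prod_{i=t_0+1}^{t-1}\frac{i-d}{i}$. Using $\ln(1-d/i) = -d/i + O(1/i^2)$ and the standard harmonic-sum estimate, this product satisfies $\alpha_t = (t_0/t)^d\bigl(1+o(1)\bigr)$ uniformly in $t = \Theta(m)$ as $m\to\infty$.

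The next step is to express $\E[w(\mu)]$. Because the algorithm matches $e_t$ with probability exactly $\alpha_t$ (conditional on the identity of $e_t$), and because conditional on $L_t$ the last arrival $\ell_t$ is uniform in $L_t$, the expected weight of $e_t$ given $L_t$ equals $w(\mu^*_t)/t$. Hence
\[
\E[w(\mu)] \;=\; \sum_{t=t_0+1}^{m}\alpha_t\cdot\frac{\E[w(\mu^*_t)]}{t}.
\]
Since $L_t$ is a uniformly random size-$t$ subset of $L$, the standard secretary sampling argument gives $\E[w(\mu^*_t)]\ge \E[w(\mu^*\!\!\restriction_{L_t})] = \tfrac{t}{m}w(\mu^*)$ (where $\mu^*\!\!\restriction_{L_t}$ keeps only the edges of $\mu^*$ whose left endpoint lies in $L_t$, which is a feasible hypermatching in $H(L_t\cup R)$). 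Plugging this in,
\[
\frac{\E[w(\mu)]}{w(\mu^*)} \;\ge\; \frac{1}{m}\sum_{t=t_0+1}^{m}\alpha_t.
\]

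The final step is to evaluate this sum. Writing $\alpha_t\approx (t_0/t)^d$, approximating the sum by $\int_{t_0}^m (t_0/t)^d\,dt$ (a Riemann sum on unit intervals, accurate up to an additive $o(1)$ term since $d\ge 2$ and the integrand is monotone), and setting $f=t_0/m\approx f_d$, one obtains
\[
\frac{1}{m}\sum_{t=t_0+1}^{m}\alpha_t \;\ge\; \frac{f_d^{\,d}}{m}\int_{t_0}^{m}\!t^{-d}\,dt - o(1)
\;=\;\frac{f_d\bigl(1-f_d^{\,d-1}\bigr)}{d-1}-o(1).
\]
Now one chooses the exploration threshold $f_d$ to maximize $g(f)=\frac{f-f^{d}}{d-1}$ over $f\in[0,1]$; differentiating gives $g'(f)=\frac{1-d f^{d-1}}{d-1}$, which vanishes at $f=f_d=1/d^{1/(d-1)}$, matching the choice in the algorithm. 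Plugging this back,
\[
g(f_d)\;=\;\frac{1}{d-1}\cdot\frac{1}{d^{1/(d-1)}}\cdot\frac{d-1}{d}\;=\;\frac{1}{d^{\,d/(d-1)}},
\]
which is the claimed ratio (up to the vanishing $o(1)$ absorbed via the ``many dummy vertices'' remark preceding the theorem).

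The main obstacle I foresee is not any conceptual difficulty but rather the careful handling of the approximation $\alpha_t \approx (t_0/t)^d$ and the replacement of the sum by an integral with explicit error control, so that one legitimately recovers the tight optimizer $f_d=d^{-1/(d-1)}$ rather than an asymptotically weaker bound. The rest is dictated by the analogy with Theorem~\ref{thm:edge_arrival}, where in the special case $d=2$ the product telescopes cleanly to $\frac{\lfloor m/2\rfloor \lfloor (m-2)/2\rfloor}{(t-1)(t-2)}$; here no comparable exact cancellation is available for general $d$, so one must work with the asymptotics throughout.
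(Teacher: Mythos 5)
Your proof is correct and follows the same skeleton as the paper's: same closed form for $\alpha_t$ (your $\prod_{i=t_0+1}^{t-1}\frac{i-d}{i}$ equals the paper's $\prod_{i=1}^d \frac{\lfloor f_d m\rfloor + 1 - i}{t-i}$), same conditional expectation giving $\E[w(\mu)] = \sum_t \frac{\alpha_t}{t}\E[w(\mu^*_t)]$, and the same sampling bound $\E[w(\mu^*_t)]\ge\frac{t}{m}w(\mu^*)$. The only real deviation is cosmetic: you replace $\alpha_t$ by the asymptotic $(t_0/t)^d(1+o(1))$ and integrate, whereas the paper evaluates the sum exactly via the telescoping identity
\[
\prod_{i=1}^{d}\frac{1}{t-i} \;=\; \frac{1}{d-1}\left(\prod_{i=1}^{d-1}\frac{1}{t-1-i} - \prod_{i=1}^{d-1}\frac{1}{t-i}\right),
\]
so your closing remark that ``no comparable exact cancellation is available for general $d$'' is mistaken --- the cancellation does exist and is what the paper uses --- but your asymptotic route is legitimate and arrives at the same $1/d^{d/(d-1)}$ bound.
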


\begin{proof}  As the set of the first $t$ vertices $L_t$ is chosen uniformly at random, the expected total weight of the maximum matching $\mu^*_t$ is greater than or equal to the expected weight of those edges of $L_t$ in the optimal matching $\mu^*$, i.e.,
	\begin{equation}
		\Ex{\sum_{e\in\mu^*_t}w_e} \geq \frac{t}{m}\cdot \sum_{e\in\mu^*}w_e 
		\label{eq:hyper_opti}.
	\end{equation}
	
	Next, we prove by induction on $t$ that for every $t> \lfloor f_d \cdot m \rfloor $:
	\begin{equation}
		\alpha_t= \prod_{i=1}^d \frac{ \lfloor f_d \cdot m \rfloor + 1 -i}{t-i}. \label{eq:hyper_alphat}
	\end{equation}
	For $t = \lfloor f_d \cdot m \rfloor +1$, $\alpha_t$ is indeed $1$.
	For $t> \lfloor f_d \cdot m \rfloor + 1$, $$\alpha_t = 1- d \sum_{i=1}^{t-1} \frac{\alpha_i}{i}= \alpha_{t-1} - d \cdot \frac{\alpha_{t-1}}{t-1} = \frac{t-d-1}{t-1} \cdot \alpha_{t-1}=\frac{t-d-1}{t-1}
	\prod_{i=1}^d \frac{ \lfloor f_d \cdot m \rfloor + 1 -i}{t-1-i}
	.$$ 
	The induction hypothesis now implies \eqref{eq:hyper_alphat}.

	We are now ready to establish the competitive ratio of Algorithm~\ref{alg:hypergraph}. Write $\mu$ for the matching returned by Algorithm~\ref{alg:hypergraph}.
	Edge $e_t=(v,S)$ is matched in round $t$ if all vertices in $S$ are available (this happens with probability $x_t$). Under this event, $e_t$ is matched with probability $\alpha_t/x_t$. Therefore,
	$$
	\Ex{\sum_{e\in\mu} w_e} = \sum_{t=\lfloor f_d \cdot m \rfloor+1}^{m} \sum_{\substack{V \subseteq L,\\ |V|=t}} \sum_{v \in V} \Ex{w_{e_t} \mid  \ell_t=v, L_t=V } \cdot \Prx{\ell_t = v, L_t = V} \cdot \alpha_t.
	$$
	Observe that $\sum_{v\in V} \Ex{w_{e_t} \mid  \ell_t=v, L_t=V} = \Ex{\sum_{e\in\mu^*_t} w_e  \mid   L_t=V }$ and also that $\Prx{\ell_t = v, L_t =V} = \Prx{L_t=V}\cdot \frac{1}{t}$. We get
	
	\begin{eqnarray}
		\Ex{\sum_{e\in\mu} w_e}
		& = &  \sum_{t=\lfloor f_d \cdot m \rfloor+1}^{m} \sum_{\substack{V \subseteq L,\\ |V|=t}} \Ex{\sum_{e\in\mu^*_t}w_e  ~\Big\vert~   L_t=V } \cdot \Prx{L_t=V}\cdot\frac{1}{t} \cdot \alpha_t \nonumber \\	
		& = &  
		\sum_{t=\lfloor f_d \cdot m\rfloor+1}^{m}  \Ex{\sum_{e\in\mu^*_t}w_e }  \cdot\frac{\alpha_t}{t}
		\stackrel{\eqref{eq:hyper_alphat}}{=}
		\sum_{t=\lfloor f_d \cdot m \rfloor+1}^{m}  \Ex{\sum_{e\in\mu^*_t}w_e }  \cdot\frac{1}{t}  \cdot  \prod_{i=1}^d \frac{ \lfloor f_d \cdot m \rfloor + 1 -i}{t-i}
		\nonumber \\
		& \stackrel{\eqref{eq:hyper_opti}}{\geq} &  \sum_{t=\lfloor f_d \cdot m \rfloor+1}^{m}  \frac{t\cdot \sum_{e\in\mu^*}w_e}{m} \cdot\frac{1}{t} \cdot \prod_{i=1}^d \frac{ \lfloor f_d \cdot m \rfloor + 1 -i}{t-i} \nonumber \\
		& = & \prod_{i=1}^{d} \left( \lfloor f_d \cdot m\rfloor +1-i \right) \cdot \frac{\sum_{e\in\mu^*}w_e  }{m} \cdot \sum_{t=\lfloor f_d \cdot m \rfloor+1}^{m} \prod_{i=1}^{d}  \frac{1}{t-i}. 
		\label{eq:hyper_teleskopic_sum}
	\end{eqnarray}

Observe that 
\[
\prod_{i=1}^{d}  \frac{1}{t-i} = \frac{1}{d-1} \left( \prod_{i=1}^{d-1} \frac{1}{t-i-1} - \prod_{i=1}^{d-1} \frac{1}{t-i} \right).
\] 
The summation over $t$ in Equation~\eqref{eq:hyper_teleskopic_sum} telescopes to 
\[
 \sum_{t=\lfloor f_d \cdot m \rfloor+1}^{m} \prod_{i=1}^{d}  \frac{1}{t-i} = \frac{1}{d-1} \cdot \left( \prod_{i=1}^{d-1} \frac{1}{\lfloor f_d \cdot m \rfloor - i}  - \prod_{i=1}^{d-1} \frac{1}{m - i} \right).
\]	
Thus, we have 
\begin{align*}
& \frac{1}{m} \cdot \prod_{i=1}^{d} \left( \lfloor f_d\cdot m\rfloor +1-i \right) \cdot \sum_{t=\lfloor f_d \cdot m \rfloor+1}^{m} \prod_{i=1}^{d}  \frac{1}{t-i} \\
= & \frac{1}{m} \cdot \prod_{i=1}^{d} \left( \lfloor f_d\cdot m\rfloor +1-i \right) \cdot \frac{1}{d-1} \cdot \left( \prod_{i=1}^{d-1} \frac{1}{\lfloor f_d \cdot m \rfloor - i}  - \prod_{i=1}^{d-1} \frac{1}{m - i} \right) \\
= & \frac{1}{d-1} \cdot \left( \frac{\lfloor f_d \cdot m \rfloor}{m}  - \prod_{i=1}^{d} \frac{\lfloor f_d \cdot m \rfloor + 1 - i}{m + 1 -i} \right) \\
= & \frac{1+o(1)}{d-1} \left( f_d - f_d^d \right) =(1+o(1))/d^{\frac{d}{d-1}},
\end{align*}
where to get the second to the last equality we used that $\frac{\lfloor f_d \cdot m \rfloor - i}{m-i}\approx f_d$ for fixed $i$ and $f_d$ as $m$ goes to infinity; the last equation follows from the definition of $f_d = 1/d^{\frac{1}{d-1}}$.
To sum up, the coefficient of $\sum_{e\in\mu^*}w_e$ in Equation~\eqref{eq:hyper_teleskopic_sum} is at least $1/d^{\frac{d}{d-1}}$; this concludes the proof of the theorem.
\end{proof}

%

\end{document}